\def\ps@headings{%
\def\@oddhead{\mbox{}\scriptsize\rightmark \hfil \thepage}%
\def\@evenhead{\scriptsize\thepage \hfil \leftmark\mbox{}}%
\def\@oddfoot{}%
\def\@evenfoot{}}
\newif\iflong 
\newif\ifcomm
\newtheorem{thm}{Theorem}
\newtheorem{cor}{Corollary}
\newtheorem{lem}{Lemma}
\newcommand{\set}[1]{\left\{#1\right\}}
\newenvironment{proof-sketch}{{\noindent\em Proof Sketch.}\hspace*{0.3em}}{\qed\medskip}
\newenvironment{proof}{{\bf Proof.}}{\hfill\rule{2mm}{2mm}\\}
\newenvironment{proof-of}[1]{{\noindent\em Proof of #1.}\hspace*{0.3em}}{\qed\medskip}
\newcounter{assumption}
\newcommand{\theassumptionletter}{A}
\renewcommand{\theassumption}{\theassumptionletter\arabic{assumption}}
\newcommand{\MF}{\mathcal{F}}
\newcommand{\MRR}{\mathcal{R}}
\newcommand{\MK}{\mathcal{K}}
\newcommand{\MP}{\mathcal{P}}
\newcommand{\beq}{\begin{equation}}
\newcommand{\eeq}{\end{equation}}
\newcommand{\beqa}{\begin{eqnarray}}
\newcommand{\eeqa}{\end{eqnarray}}
\newcommand{\beqan}{\begin{eqnarray*}}
\newcommand{\eeqan}{\end{eqnarray*}}
\newcommand{\ben}{\begin{eqnarray*}}
\newcommand{\een}{\end{eqnarray*}}
   \newcommand\comm[1]{\textcolor{blue}{ #1}}
   \newcommand\comm[1]{}
   \renewcommand{\todo}[1]{}
\newcommand{\bm}{\boldmath}
\newcommand{\qed}{{\hfill$\Box$}}
\newcommand{\C}{\mbox{\bm $C$}}
\newcommand{\D}{\mbox{\bm $D$}}
\newcommand{\Gg}{\mbox{\bm $g$}}
\newcommand{\xx}{\mbox{$\mathbf{x}$}}
\newcommand{\yy}{\mbox{$\mathbf y$}}
\newcommand{\G}{\mbox{\bm $G$}}
\newcommand{\A}{\mbox{\bm $A$}}
\newcommand{\B}{\mbox{\bm $B$}}
\newcommand{\Y}{\mbox{\bm $Y$}}
\newcommand{\X}{\mbox{\bm $X$}}
\renewcommand\paragraph{\@startsection{paragraph}{4}{\z@}%
    {1.5ex plus .2ex minus .3ex}%
            {-0em}%
                        {\normalsize\bf}}
\newcommand{\captionfonts}{\small}
\long\def\@makecaption#1#2{%
  \vskip\abovecaptionskip
  \sbox\@tempboxa{{\captionfonts #1: #2}}%
  \ifdim \wd\@tempboxa >\hsize
    {\captionfonts #1: #2\par}
  \else
    \hbox to\hsize{\hfil\box\@tempboxa\hfil}%
  \fi
  \vskip\belowcaptionskip}
\begin{document}
\title{Binary Independent Component Analysis with OR Mixtures}
\author{
\begin{tabular}{c}
Huy Nguyen and Rong Zheng  \\
Department of Computer Science \\
University of Houston \\
Houston, TX 77204 \\
E-mail: {\it nahuy@cs.uh.edu, rzheng@cs.uh.edu}
\end{tabular}
}

\maketitle

\begin{abstract}
Independent component analysis (ICA) is a computational method for
separating a multivariate signal into subcomponents assuming the
mutual statistical independence of the non-Gaussian source
signals. The classical Independent Components Analysis (ICA)
framework usually assumes linear combinations of independent
sources over the field of real-valued numbers $\MRR$.  In this
paper, we investigate binary ICA for OR mixtures (bICA), which can
find applications in many domains including medical diagnosis,
multi-cluster assignment, Internet tomography and network resource
management. We prove that bICA is uniquely identifiable under the
disjunctive generation model, and propose a deterministic
iterative algorithm to determine the distribution of the latent
random variables and the mixing matrix. The inverse problem concerning
inferring the values of latent variables are also considered along with
noisy measurements. We conduct an extensive simulation study  to
verify the effectiveness of the propose algorithm and present examples of
real-world applications where bICA can be applied.\\
\end{abstract}

\begin{IEEEkeywords}
Boolean functions, clustering methods, graph matching, independent component analysis.
\end{IEEEkeywords}

\section{Introduction}
\label{sect:intro}
Independent component analysis (ICA) is a computational method for separating a
multivariate signal into additive subcomponents supposing the mutual
statistical independence of the non-Gaussian source signals. The classical
Independent Components Analysis (ICA) framework usually assumes linear
combinations of independent sources over the field of real-valued numbers
$\MRR$. Consider the following generative data model where the observations are
disjunctive mixtures of binary independent sources. Let $\xx = [x_1, x_2, \ldots, x_n]^T$ be a $m$-dimension binary
random vector with joint distribution $\MP(\xx)$, which are observable.  $\xx$
is generated from a set of $n$ independent binary random variables $\yy
= [y_1,y_2,\ldots,y_n]^T$ as follows,
\beq
x_i = \bigvee_{j=1}^{n}{(g_{ij}\wedge y_j)}, \mbox{ $i = 1, \ldots, m$},
\label{eq:boolean}
\eeq
where $\wedge$ is Boolean {\em AND}, $\vee$ is Boolean {\em OR}, and $g_{ij}$ is
the entry in the $i$'th row and $j$'th column of an unknown binary mixing
matrix $\G$. Throughout this paper, we denote by $\G_{i,:}$ and $\G_{:,j}$
the $i$'th row and $j$'th column of matrix $\G$ respectively. For the ease of
presentation, we introduce a short-hand notation for the above disjunctive model as,
$$
\xx = \G \otimes \yy.
$$

The relationship between observable variables in $\xx$ and latent binary
variables in $\yy$ can also be represented by an undirected bi-partite graph
$G=(U,V,E)$, where $U = \{x_1, x_2, \ldots, x_m\}$ and $V = \{y_1, y_2, \ldots,
y_n\}$ (Figure~\ref{fig:model}). An edge $e=(x_i,y_j)$ exists  if $g_{ij} = 1$. We will refer to
$\G$ as the binary adjacency matrix of graph $G$. The key notations used in this
paper are listed in Table~\ref{tab:mainnotations}.
\begin{figure}[tp]
\begin{center}
\includegraphics[width=3in]{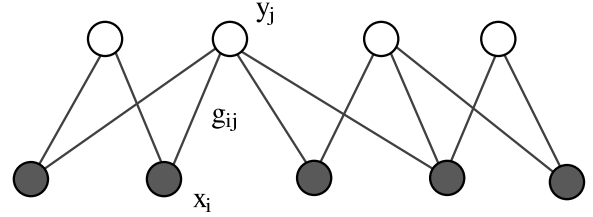}
\end{center}
\caption{Illustration of the OR mixture model}
\label{fig:model}
\end{figure}

Consider an $m\times T$ matrix $\X$ and an $n\times T$ matrix $\Y$, which are the collection of $T$
realizations of random vector $\xx$ and $\yy$ respectively. The goal of {\it binary independent component
analysis with OR mixtures (bICA)} is to estimate the distribution of the
latency random variables $\yy$ and the binary mixing matrix $\G$ from $\X$ such
that $\X$ can be decomposed into OR mixtures of columns of $\Y$.

In this paper, we make the following contributions.  First, we investigate the
identifiability of bICA and prove that under the disjunctive generation model,
the OR mixing is identifiable up to permutation. Second, we develop an iterative
algorithm for bICA that does not make assumptions on the noise model or
prior distributions of the mixing matrix. Interestingly, the approach is shown to be
robust under moderate to medium XOR noises and insufficient samples. We
furthermore consider the inverse problem of inferring the values of $\yy$ given
noisy observations $\X$ and the inferred model.  Finally, we present two case
studies to illustrate how bICA can be used to model and solve real-world problems. .

\begin{table}[tp]
\caption{Notations}
\centering 
\vspace{0.1in}
{\normalsize
\begin{tabular}{c | c}
\hline\hline
& the number of observable variables, \\[-0.8ex]
\raisebox{1.2ex}{$m,n,T$}
& hidden variables and observations \\
\hline
& the bi-partite graph representing the\\[-0.8ex]
\raisebox{1.2ex}{$G$}
& observable-hidden variable relationship  \\
\hline
& the vector of $m$ binary observations\\[-0.8ex]
\raisebox{1.2ex}{$\xx_{m\times 1}$}
& from $m$ observable variables\\
\hline
& the vector of $n$ binary activities\\[-0.8ex]
\raisebox{1.2ex}{$\yy_{n\times 1}$}
& from $n$ hidden variables \\
\hline
& the collection of $T$ realizations of $\xx$ \\[-0.8ex]
\raisebox{1.2ex}{$\X_{m\times T}$}
& (observation matrix) \\
\hline
& the collection of $T$ realizations of $\yy$ \\[-0.8ex]
\raisebox{1.2ex}{$\Y_{n\times T}$}
& (activity matrix) \\
\hline
$\G_{m \times n}$ & the binary adjacency matrix of graph $G$ \\
\hline
& the probability vector associated with \\[-0.8ex]
\raisebox{1.2ex}{$p_{1\times n}$}
& $n$ (Bernoulli) hidden variables \\
\hline\hline
\end{tabular}
}
\label{tab:mainnotations}
\end{table}

The rest of the paper is organized as follows. In Section~\ref{sec:related}, we
give a brief overview of related work. In Section~\ref{sec:property}, several
important properties of bICA are proved. In Section~\ref{sec:algo}, we elaborate
on an iterative procedure to infer bICA and several complexity reduction
techniques. Formulation and solution to the inverse problem with noisy
measurements are presented in Section~\ref{sec:inverse}. Effectiveness of the
proposed method is evaluated through simulation results in Section~\ref{sec:eval}.
Real-world problem domains where bICA can be applied are discussed in Section~\ref{sec:application}
and followed by conclusion and future work in Section~\ref{sec:conclusion}.

\section{Related work}
\label{sec:related}
Most ICA methods assume linear mixing of continuous signals~\cite{Hyvarinen00}.
A special variant of ICA, called binary ICA (BICA), considers boolean mixing
(e.g., OR, XOR etc.) of binary signals. Existing solutions to BICA mainly
differ in their assumptions of the binary operator (e.g., OR or XOR), the prior
distribution of the mixing matrix, noise model, and/or hidden causes.

In \cite{Yeredor07}, Yeredor considers BICA in XOR mixtures and investigates
the identifiability problem. A deflation algorithm is proposed for source
separation based on entropy minimization. Since XOR is addition in GF(2), BICA
in XOR mixtures can be viewed as the binary counterpart of classical linear ICA
problems. In \cite{Yeredor07}, the number of independent random sources $K$ is
assumed to be known. Furthermore, the mixing matrix is a $K$-by-$K$ invertible
matrix.  Under these constraints, it has been proved that the XOR model is
invertible and there exist a unique transformation matrix to recover the
independent components up to permutation ambiguity. Though our proof of
identifiability in this paper is inspired by the approach in \cite{Yeredor07},
due to the ``non-linearity" of OR operations, the notion of invertible matrices
no longer apply. New proofs and algorithms are warranted to unravel the
properties of binary OR mixtures.

In \cite{Kabán_factorisationand}, the problem of factorization and de-noise of
binary data due to independent continuous sources is considered.  The sources
are assumed to be continuous following beta distribution in [0, 1]. Conditional
on the latent variables, the observations follow the independent Bernoulli
likelihood model with mean vectors taking the form of a linear mixture of the
latent variables. The mixing coefficients are assumed to non-negative and sum
to one.  A variational EM solution is devised to infer the mixing coefficients.
A post-process step is applied to quantize the recovered ``gray-scale" sources
into binary ones.  While the mixing model in \cite{Kabán_factorisationand} can
find many real world applications, it is not suitable in the case of OR
mixtures.

In \cite{Singliar2006}, a noise-OR model is introduced to model dependency
among observable random variables using $K$ (known) latent factors.  A variational inference
algorithm is developed. In the noise-OR model, the probabilistic dependency
between observable vectors and latent vectors is modeled via the noise-OR
conditional distribution. The dimension of the latent vector is assumed to be
known and less than that of the observable.

In \cite{computer06anon-parametric}, Wood {\it et al.} consider the problem of
inferring infinite number of hidden causes following the same Bernoulli
distribution. Observations are generated from a noise-OR distribution.  Prior
of the infinite mixing matrix is modeled as the Indian buffet process~\cite{Griffiths05infinitelatent}.
Reversible jump Markov chain Monte Carlo and  Gibbs sampler techniques are
applied to determine the mixing matrix based on observations. In our model, the
hidden causes are finite in size, and may follow different distribution.
Streith {\it et al.}~\cite{Streich09} study the problem of multi-assignment
clustering for boolean data, where the observations are either drawn from a
signal following OR mixtures or from a noise component.  The key assumption
made in the work is that the elements of matrix $\X$ are conditionally
independent given the model parameters (as opposed to the latent variables).
This greatly reduces the computational complexity and makes the scheme amenable
to a gradient descent-based optimization solution. However, this assumption is
in general invalid.

There exists a large body of work on blind deconvolution with binary sources in
the context of wireless communication~\cite{Diamantaras06,LiCichocki03}. In
time-invariant linear channels, the output signal $x(k)$ is a convolution of
the channel realizations $a(k)$ and the input signal $s(k)$, $k=1,2,\ldots, K$
as follows:
\beq
x(k) =
\sum_{l=0}^{L}{a(l)s(k-l)}, k=1,\ldots, K.
\label{eq:deconvolution}
\eeq
The objective is to recover the input signal $s$.  Both stochastic and
deterministic approaches have been devised for blind deconvolution.  As evident
from \eqref{eq:deconvolution}, the output signals are linear mixtures of the
input sources in time, and additionally the mixture model follows a specific
structure.

\begin{table*}[tp]
\begin{center}
\begin{tabular}{|c||c|c|c|c|}
\hline
Algorithm  & Sources & Generative model & Under/Over determined & Dimension of latent variables\\
\hline \hline
\cite{Yeredor07} & Binary & Binary XOR  & --  & Known \\
\hline
\cite{Kabán_factorisationand} & Continuous & Linear & Over & Known \\
\hline
\cite{Singliar2006} & Binary & Noise-OR  & Over & Known \\
\hline
\cite{computer06anon-parametric} & Binary & Noise-OR & Under & Infinite \\
\hline
\cite{Streich09} & Binary & Binary OR & Over & Known \\
\hline
\cite{Frolov07,Belohlavek2010} & Binary & Binary OR & Over & Unknown, try to minimize \\
\hline
\cite{Diamantaras06,LiCichocki03} & Binary & Linear & -- & Known \\
\hline\hline
{\bf bICA} & {\bf Binary} & {\bf Binary OR} & {\bf Under} & {\bf Unknown but finite} \\
\hline
\end{tabular}
\caption{Related work comparison chart}
\label{tab:comparison}
\end{center}
\end{table*}

Literature on boolean/binary factor analysis (BFA) is also related to our work.
The goal of BFA is to decompose a binary matrix $\X_{m\times T}$ into $\A_{m
\times n} \otimes \B_{n\times T}$ with $\otimes$ being the \textit{OR} mixture
relationship as defined in (\ref{eq:boolean}). $\X$ in BFA is
often called an attribute-object matrix providing $m$-dimension attributes of
$T$ objects.  $\A$ and $\B$ are  the attribute-factor and factor-object
matrices. All the elements in $\X$, $\A$, and $\B$ are either 0 or 1. $n$ is
defined to be the number of underlying factors and is assumed to be
considerably smaller than the number of objects $T$. BFA methods aim to find a
feasible decomposition minimizing $n$.  Frolov {\it et al.} study the problem
of factoring a binary matrix using Hopfield neural networks~\cite{Frolov05,
Frolov07, Frolov07_2}. This approach is based on heuristic and do not provide
much  theoretical insight regarding the properties of the resulting
decomposition.  More recently, Belohlavek {\it et al.} propose a matrix
decomposition method utilizing formal concept analysis~\cite{Belohlavek2010}.
The paper claims that optimal decomposition with the minimum number of factors
are those where factors are formal concepts.  It is important to note that even
though BFA assumes the same disjunctive mixture model as in our work, the
objective is different.  While BFA tries to find a matrix factorization so that
the number of factors are minimized, bICA tries to identify independent
components. One can easily come up an example, where the number of independent
components (factors) is larger than the number of attributes. Since BFA always
finds factors no larger than the number of attributes, the resulting factors
are clearly dependent in this case.

Finally, \cite{computer06anon-parametric} consider the under-presented case of
less observations than latent sources with continuous noise, while
\cite{Kabán_factorisationand,Streich09,Frolov07,Belohlavek2010} deal with the over-determined case,
where the number of observation variables are much larger. In this work, we consider
primarily the under-presented cases that we typically encounter in data
networks where the number of sensors are much smaller and the number of signal
sources (i.e. users).

We summarize the aforementioned related work in Table~\ref{tab:comparison}.

\section{Properties of bICA}
\label{sec:property}
In this section, we investigate the fundamental properties of bICA. In
particular, we are interested in the following questions:
\begin{itemize}
\item {\bf Expressiveness:} can any set of binary random variables be
decomposed into binary independent components using OR mixtures?
\item {\bf Independence of OR mixtures:} for mixtures of independent sources,
what is the condition that they are independent?
\item {\bf Identifiability:} given a set of  binary random variables following
the bICA data model, is the decomposition unique?
\end{itemize}
\paragraph*{Expressiveness}
Expressiveness of OR mixtures is limited. This can be shown through an example.
Let $y_1$ and $y_2$ be two independent binary random variables with $P(y_1=1) =
p \neq 0.5$ and $P(y_2=1) = q \neq 0.5$. Let $x_1 = y_1$ and $x_2 = y_1 + y_2$,
where `+' is addition in the finite field GF(2). It is easy to see that $x_1$
and $x_2$ are correlated since $P(x_2 = 1) = P(y_1 = 1)P(y_2 = 0) + P(y_1 =
0)P(y_2 = 1) = q(1-p)+p(1-q)$, $P(x_1 = 1) = p$, while $P(x_1 = 1, x_2 = 1) =
P(y_2 = 0) = 1-q$.  On the other hand, $x_2$ can not be decomposed into an OR
mixture of $y_1$ and $y_2$. This essentially shows that OR mixtures of binary
random variables only span a subset of multi-variate binary distributions.
There exist correlated binary random variables ($x_1, x_2$ in this example)
that cannot be modeled as OR mixtures of independent binary components.
\paragraph*{Independence of mixtures}
Now we turn to the second question, namely, under what condition are binary
random variables that follow the OR mixture model independent.  In general, pairwise
independent random variables are not jointly independent.  Interestingly, we
show that pairwise independence implies joint independence for OR mixtures.

\begin{thm} Let $\yy = [y_1, y_2, \ldots, y_n]^T$ denote $n$ statistically
independent sources in $GF(2)$, the $i$-th source having 1-probability $p_i$ .
Let $\xx = \D \otimes \yy$, where $\D$ is a $m\times n$ matrix (with elements in
GF(2)). Let $\eta(\xx)$ and $\C(\xx)$ denote the mean and covariance (resp.) of $\xx$. If:
\begin{enumerate}
\item All elements of $\eta(\xx)$ are nonzero and not 1's (called {\it
non-degenerate}),
\item $\C(\xx)$ is diagonal,
\end{enumerate}
Then i) $m = n$, and ii) $\D$ is a permutation matrix.
\label{thm:perm}
\end{thm}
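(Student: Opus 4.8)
The plan is to translate the two hypotheses, which are statements about the first two moments of $\xx$, into a purely combinatorial condition on the row supports of $\D$. Write $S_i=\{\,j:d_{ij}=1\,\}$ for the support of the $i$-th row, so that $x_i=\bigvee_{j\in S_i}y_j$. Since each $y_j$ takes values in $\{0,1\}$ and the sources are independent, $\{x_i=0\}$ is exactly the event that every $y_j$ with $j\in S_i$ is $0$, giving $[\eta(\xx)]_i=\E[x_i]=1-\prod_{j\in S_i}(1-p_j)$. Applying the idempotence $(1-y_j)^2=1-y_j$ to a pair $i\neq k$ gives $\E[(1-x_i)(1-x_k)]=\prod_{j\in S_i\cup S_k}(1-p_j)$, and since covariance is unchanged by the substitution $x\mapsto 1-x$, subtracting the product of the complementary means yields the key identity
\begin{equation}
[\C(\xx)]_{ik}=\Big(\prod_{j\in S_i\cup S_k}(1-p_j)\Big)\Big(1-\prod_{j\in S_i\cap S_k}(1-p_j)\Big),\qquad i\neq k.
\label{eq:covsketch}
\end{equation}

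I would then feed the hypotheses into \eqref{eq:covsketch}. Non-degeneracy plays two roles: $[\eta(\xx)]_i\neq 0$ forces $S_i\neq\emptyset$, so every row of $\D$ is nonzero, while $[\eta(\xx)]_i\neq 1$ forces $\prod_{j\in S_i}(1-p_j)\neq 0$, so no source actually used has $p_j=1$. Under the natural non-triviality assumption $0<p_j<1$ the first factor in \eqref{eq:covsketch} is then strictly positive, so diagonality of $\C(\xx)$ forces the second factor to vanish: $\prod_{j\in S_i\cap S_k}(1-p_j)=1$ for all $i\neq k$. As each factor lies in $(0,1)$, a nonempty product is strictly less than $1$, so the product can equal $1$ only when it is empty; that is, $S_i\cap S_k=\emptyset$. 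Hence diagonal covariance is equivalent to the row supports being pairwise disjoint, equivalently to each column of $\D$ carrying at most one $1$.

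It remains to upgrade pairwise disjointness to the permutation conclusion. Disjointness gives at most one $1$ per column and non-degeneracy gives at least one $1$ per row; if every source is genuinely used (no zero column), the $S_i$ partition $\{1,\dots,n\}$, and counting the ones of $\D$ two ways gives $\sum_i|S_i|=n\ge m$, so $m\le n$ with equality iff every $|S_i|=1$, in which case $\D$ has exactly one $1$ in each row and column and is a permutation matrix. The main obstacle is exactly this last step: pairwise disjointness by itself still allows a row with $|S_i|\ge 2$, i.e.\ an observation equal to a genuine disjunction $y_a\vee y_b$ of two sources unused elsewhere, and such a row produces a diagonal covariance while leaving $m<n$. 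Ruling this out --- and thereby forcing $m=n$ --- is where one must use more than the two moment conditions, namely the modeling premise that the $y_j$ are the irreducible independent components and that each is represented exactly once; I expect the heart of the proof to be this atomicity argument rather than the moment computation.
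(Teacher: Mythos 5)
Your proposal takes a genuinely different route from the paper's and is correct in what it establishes. The paper argues by contradiction via two preparatory lemmas: for any pair $x_k, x_l$ it writes $x_k = u \vee v_1$ and $x_l = u \vee v_2$, where $u$ is the OR of the shared sources and $v_1, v_2$ are the ORs of the exclusive ones; one lemma says ORs over disjoint groups of non-degenerate sources are independent, a second says $u \vee v_1$ and $u \vee v_2$ are correlated whenever $u$ is non-degenerate, so diagonality of $\C(\xx)$ forces $u=0$, i.e.\ no column of $\D$ carries two 1's. Your closed-form identity
\[
[\C(\xx)]_{ik}=\Bigl(\prod_{j\in S_i\cup S_k}(1-p_j)\Bigr)\Bigl(1-\prod_{j\in S_i\cap S_k}(1-p_j)\Bigr),
\qquad i\neq k,
\]
reaches the same disjointness conclusion in a single computation, and it is sharper: it shows the off-diagonal covariances are always non-negative and identifies exactly when they vanish, which is what the paper's two lemmas assert only qualitatively. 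Both arguments quietly require the sources themselves to be non-degenerate ($0<p_j<1$): the paper inherits this through the hypotheses of its lemmas, while you state it explicitly.

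The obstacle you flag at the end is real, and it is a flaw in the paper's own proof rather than a deficiency of yours. After establishing disjointness, the paper simply declares: ``There are only $m$ such columns. Thus, $m=n$. Furthermore, $\D$ is a permutation matrix.'' This does not follow from the two moment hypotheses. Your example of a row $x_i = y_a \vee y_b$ with $y_a, y_b$ used nowhere else satisfies non-degenerate means and diagonal covariance, yet gives $m<n$ and two identical columns (both equal to the $i$'th standard basis vector); a source with an all-zero column breaks the count in the same way. The missing ingredient is precisely the convention the paper introduces only later, in its identifiability discussion: the mixing matrix has no zero columns and no identical columns (identical columns being merged into a single source). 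Under that convention, disjointness makes the columns distinct standard basis vectors (so $n\le m$), non-degenerate means make every row nonzero (so $m\le n$), and your counting argument closes the proof. So your ``atomicity'' diagnosis is exactly the assumption the paper uses but never writes down.
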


Let us first establish the following lemmas.
\begin{lem}
Let $u$ and $v$ be two RVs in GF(2) with 1-probabilities $p_u$ and $p_v$ (resp.),
and let $w \stackrel{\Delta}{=}  u \vee v$ If $u$ and $v$ are independent,
non-degenerate ($0 < p, q < 1$) then $w$ is also non-degenerate.
\label{lem:1}
\end{lem}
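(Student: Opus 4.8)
The plan is to compute the $1$-probability of $w = u \vee v$ directly and show it lies strictly between $0$ and $1$. First I would write $P(w = 1) = 1 - P(w = 0)$. Since $w = u \vee v$ equals $0$ exactly when both $u = 0$ and $v = 0$, and since $u$ and $v$ are independent, we have $P(w = 0) = P(u = 0)\,P(v = 0) = (1 - p_u)(1 - p_v)$. Therefore $P(w = 1) = 1 - (1 - p_u)(1 - p_v)$.

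It then remains to verify the two strict inequalities $0 < P(w=1) < 1$ using the non-degeneracy hypothesis $0 < p_u, p_v < 1$. For the upper bound, observe that $(1 - p_u)(1 - p_v) > 0$ because each factor is strictly positive (as $p_u, p_v < 1$), so $P(w = 1) = 1 - (1-p_u)(1-p_v) < 1$. For the lower bound, note that $(1 - p_u)(1 - p_v) < 1$ since each factor is strictly less than $1$ (as $p_u, p_v > 0$), hence $P(w = 1) > 0$. This establishes that $w$ is non-degenerate.

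There is no genuine obstacle here; the statement is elementary and the entire argument reduces to the factorization of $P(w = 0)$ enabled by independence together with the observation that a product of two numbers in the open interval $(0,1)$ again lies in $(0,1)$. The one point to be careful about is to use independence only to justify the factorization $P(u = 0, v = 0) = P(u = 0)\,P(v = 0)$, and to note that the hypothesis stated as ``$0 < p, q < 1$'' refers to $p_u$ and $p_v$. The result will be used as a building block in the proof of Theorem~\ref{thm:perm}, presumably to guarantee that repeated OR combinations of non-degenerate independent sources remain non-degenerate, matching condition~(1) of that theorem.
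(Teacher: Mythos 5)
Your proof is correct and follows exactly the same route as the paper's: both compute $P(w=1) = 1 - (1-p_u)(1-p_v)$ via independence and conclude $0 < p_w < 1$ from $0 < p_u, p_v < 1$. Your version simply spells out the two strict inequalities that the paper leaves as immediate, so there is nothing to change.
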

\begin{proof}
Clearly, $p_w =  P(w = 1) = 1 - (1-p_u)(1-p_v)$. Since $0 < p_u, p_v < 1$, we have $0 < p_w < 1$.
\end{proof}
\begin{lem}
Consider non-degenerate independent binary random variables $y_1, y_2, y_3$.
Then, $x_1 = y_1\vee y_2$ and $x_2 = y_3$ are independent.
\label{lem:2}
\end{lem}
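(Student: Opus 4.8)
The plan is to show directly that the joint distribution of $(x_1, x_2)$ factors, i.e., that $P(x_1 = a, x_2 = b) = P(x_1 = a)\,P(x_2 = b)$ for every $a, b \in \{0,1\}$. The key structural observation is that $x_1 = y_1 \vee y_2$ is a deterministic function of the pair $(y_1, y_2)$ alone, whereas $x_2 = y_3$ depends only on $y_3$. Since $y_1, y_2, y_3$ are mutually independent, the block $(y_1, y_2)$ is independent of $y_3$, and hence any function of the former is independent of any function of the latter. This reduces the claim to the standard fact that measurable functions of disjoint groups of independent variables remain independent, here specialized to two-valued variables.

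First I would make the block independence precise: from the mutual independence of $y_1, y_2, y_3$ one has $P(y_1 = s_1, y_2 = s_2, y_3 = s_3) = P(y_1 = s_1)P(y_2 = s_2)P(y_3 = s_3)$, which on summing over the appropriate values yields $P((y_1,y_2) = (s_1,s_2),\, y_3 = s_3) = P((y_1,y_2)=(s_1,s_2))\,P(y_3 = s_3)$. Then I would write the event $\{x_1 = a\}$ as the disjoint union of the events $\{(y_1,y_2) = (s_1,s_2)\}$ over those pairs with $s_1 \vee s_2 = a$, and combine this with $\{x_2 = b\} = \{y_3 = b\}$. Summing the block-factorization over the fibers of the map $(y_1, y_2) \mapsto y_1 \vee y_2$ then factors $P(x_1 = a, x_2 = b)$ as $P(x_1 = a)P(y_3 = b) = P(x_1 = a)P(x_2 = b)$. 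Since this holds for all four combinations of $a,b \in \{0,1\}$, independence of $x_1$ and $x_2$ follows.

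There is essentially no hard step here. The only point worth stating carefully is that summation over the fibers of $(y_1, y_2) \mapsto y_1 \vee y_2$ is exactly what converts the three-way product factorization into the two-way factorization of $(x_1, x_2)$; no property of the OR operation beyond its being a function of $(y_1,y_2)$ is actually used, so $y_1 \vee y_2$ could be replaced by any function of $y_1, y_2$. I note also that non-degeneracy plays no role in this particular lemma, though it is kept in the hypotheses for consistency with Lemma~\ref{lem:1} and the surrounding development.
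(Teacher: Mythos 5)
Your proposal is correct and rests on the same underlying computation as the paper's proof: both express the event $\{x_1 = a\}$ in terms of the configurations of $(y_1,y_2)$ and use mutual independence of $y_1,y_2,y_3$ to factor the sum, the paper doing this by explicit enumeration of the three configurations with $y_1 \vee y_2 = 1$ after first reducing (via the binary-variable shortcut) to checking only $P(x_1=1,x_2=1)=P(x_1=1)P(x_2=1)$. Your version is slightly more general in that it invokes the grouping property of independence and handles all four cells $(a,b)$ directly, and your observations that the OR structure is irrelevant and that non-degeneracy is never used are both accurate (the paper's proof uses neither).
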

\begin{proof}
To prove independence of two binary random variables $x_1$ and $x_2$, it is
sufficient to show $P(x_1 = 1, x_2 = 1) = P(x_1 = 1)P(x_2 = 1)$.
\begin{equation}
\begin{array}{lll}
P(x_1 = 1, x_2 = 1) & = & P(y_1 = 1, y_2 = 1, y_3 = 1) \\
            & + & P(y_1 = 1, y_2 = 0, y_3 = 1) \\
            & + & P(y_1 = 0, y_2 = 1, y_3 = 1) \\
            & = & P(y_1 = 1)P(y_2 = 1)P(y_3 = 1) \\
            & + & P(y_1 = 1)P(y_2 = 0)P(y_3 = 1) \\
            & + & P(y_1 = 0)P(y_2 = 1)P(y_3 = 1) \\
            & = & P(x_1 = 1)P(x_2 = 1)
\end{array}
\end{equation}
\end{proof}
Similar, we can show the following result.
\begin{lem}
Consider non-degenerate independent binary random variables $y_1, y_2, y_3$.
Then, $x_1 = y_1\vee y_2$ and $x_2 = y_1\vee y_3$ are correlated.
\label{lem:corr}
\end{lem}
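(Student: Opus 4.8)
The plan is to establish correlation by exactly the criterion used in Lemma~\ref{lem:2}: since $x_1$ and $x_2$ are both binary, it suffices to show that $P(x_1=1,x_2=1) \neq P(x_1=1)P(x_2=1)$, and in fact I will show the covariance is strictly positive. Throughout, write $p_i = P(y_i=1)$, so that $0 < p_i < 1$ for $i=1,2,3$ by non-degeneracy.

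First I would record the two marginals using the OR formula from Lemma~\ref{lem:1}: $P(x_1=1) = 1-(1-p_1)(1-p_2)$ and $P(x_2=1) = 1-(1-p_1)(1-p_3)$. The crux is the joint probability $P(x_1=1,x_2=1)$. Enumerating the favorable atoms directly is awkward, because $x_1$ and $x_2$ share the source $y_1$; so instead I would pass to the complementary event and use inclusion--exclusion,
\[
P(x_1=1,x_2=1) = 1 - P(x_1=0) - P(x_2=0) + P(x_1=0,x_2=0).
\]
The payoff of this rewriting is that the joint-zero event is a single conjunction, $\{\, x_1=0,\ x_2=0 \,\} = \{\, y_1=0,\ y_2=0,\ y_3=0 \,\}$, whose probability factors cleanly as $(1-p_1)(1-p_2)(1-p_3)$ by independence of the sources. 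Substituting the two single-zero probabilities $P(x_1=0)=(1-p_1)(1-p_2)$ and $P(x_2=0)=(1-p_1)(1-p_3)$ then gives a closed form for the joint.

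Finally I would subtract $P(x_1=1)P(x_2=1)$ from $P(x_1=1,x_2=1)$. Setting $a=1-p_1$, $b=1-p_2$, $c=1-p_3$ to streamline the algebra, the joint becomes $1-ab-ac+abc$ while the product of marginals is $(1-ab)(1-ac)=1-ab-ac+a^2bc$, so their difference collapses to $abc(1-a) = p_1(1-p_1)(1-p_2)(1-p_3)$. Non-degeneracy forces each of $p_1$, $1-p_1$, $1-p_2$, $1-p_3$ to lie strictly between $0$ and $1$, hence this covariance is strictly positive and $x_1,x_2$ are (positively) correlated.

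I do not expect a genuine obstacle here; the only thing to get right is the joint probability, and the entire effect is driven by the shared source $y_1$, which the complementary-event computation isolates exactly. As a sanity check, note that if $x_1$ and $x_2$ depended on disjoint source sets the shared term would disappear and the covariance would vanish, recovering the independence of Lemma~\ref{lem:2}.
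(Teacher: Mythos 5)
Your proof is correct. Note that the paper never actually writes out a proof of this lemma: it only remarks that the result can be shown ``similarly'' to Lemma~2, whose proof enumerates the atoms of $(y_1,y_2,y_3)$ on which both mixtures equal one and checks the factorization term by term. Carried out that way here, the enumeration would involve five of the eight atoms and a less transparent comparison. Your route is different and cleaner: you exploit the fact that an OR mixture vanishes exactly when all of its sources vanish, so $P(x_1=0)=(1-p_1)(1-p_2)$, $P(x_2=0)=(1-p_1)(1-p_3)$, and $P(x_1=0,x_2=0)=(1-p_1)(1-p_2)(1-p_3)$ factor immediately, and inclusion--exclusion converts these into the joint probability. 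The algebra checks out: with $a=1-p_1$, $b=1-p_2$, $c=1-p_3$, the joint is $1-ab-ac+abc$, the product of marginals is $1-ab-ac+a^2bc$, and the difference $abc(1-a)=p_1(1-p_1)(1-p_2)(1-p_3)$ is strictly positive under non-degeneracy. This buys two things the paper's implicit argument does not make explicit: a closed form for the covariance, and the stronger conclusion that the correlation is strictly \emph{positive} and is driven entirely by the shared source $y_1$. Your closing observation also matches how the lemma is used downstream: in the proof of Theorem~1 it is invoked precisely to force the common subgroup $u$ of two rows of $\D$ to be empty, which is the degenerate case in which your shared term disappears and independence (Lemma~2) is recovered.
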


Now we are in the position to prove Theorem~\ref{thm:perm}.
\begin{proof}[Proof of Theorem~\ref{thm:perm}]
We prove by contradiction. The essence of the proof is similar to that in \cite{Yeredor07}.
Let us assume now that $\D$ is a general matrix, and consider any pair $x_k$ and
$x_l$ ($k \neq l$) in $\xx$. $x_k$ and $x_l$ are OR mixtures of respective subgroups of the
sources, indexed by the 1-s in $\D_{k,:}$, and $\D_{l,:}$, the $k$-th and $l$-th rows (respectively) of
$\D$. These two subgroups consist of, in turn, three other subgroups (some of which
may be empty):
\begin{enumerate}
\item Sub-group 1: Sources common to $\D_{k,:}$ and $\D_{l,:}$ . Denote the OR mixing of
these sources as $u$;
\item Sub-group 2: Sources included in $\D_{k,:}$ but excluded from $\D_{l,:}$. Denote the
OR mixing of these sources as $v_1$;
\item Sub-group 3: Sources included in $\D_{l,:}$ but excluded from $\D_{k,:}$. Denote the
OR mixing of these sources as $v_2$.
\end{enumerate}
In other words, $x_k = u \vee v_1$ and $x_l = u \vee v_2$. By applying Lemma~\ref{lem:2}
iteratively, we can show that $v_1$ and $v_2$ are independent and
non-degenerate. Furthermore, if $u \neq 0$, then $u$ is independent of $v_1$
and $v_2$. From Lemma~\ref{lem:corr}, we show that $x_k$ and $x_l$ are correlated. This
contradicts with the condition that  $\C(\xx)$  is diagonal. This implies that $u
= 0$. Therefore, the two rows $\D_{k,:}$ and $\D_{l,:}$ do not share common
sources, or, in other words, there is no column $j$ in $\D$ such that both
$\D_{k,j}$ and $\D_{l,j}$ are both 1. There are only $m$ such columns. Thus,
$m=n$.  Furthermore, $\D$ is a permutation matrix.
\end{proof}

Theorem~\ref{thm:perm} necessarily implies the following result:
\begin{cor}
Let $\xx = \G \otimes \yy$ for some $\G$ and independent non-degenerate sources $\yy$. Then, if
elements of $\xx$ is non-degenerate and pair-wise independent, the elements in
$\xx$ are jointly independent.
\end{cor}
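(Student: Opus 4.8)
The plan is to show that the hypotheses of the corollary are precisely the hypotheses of Theorem~\ref{thm:perm}, so that the corollary follows almost immediately once we translate ``pair-wise independence'' into the covariance language used there. First I would observe that for binary (GF(2)) random variables, two components $x_k$ and $x_l$ are independent if and only if they are uncorrelated: since $x_k, x_l \in \{0,1\}$, we have $\mathrm{Cov}(x_k, x_l) = P(x_k = 1, x_l = 1) - P(x_k=1)P(x_l=1)$, which vanishes exactly when $P(x_k=1, x_l=1) = P(x_k=1)P(x_l=1)$, i.e.\ exactly when $x_k$ and $x_l$ are independent. Hence the assumption that the elements of $\xx$ are pair-wise independent is equivalent to all off-diagonal entries of $\C(\xx)$ being zero, that is, $\C(\xx)$ is diagonal.

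With this translation in hand, both premises of Theorem~\ref{thm:perm} are met with $\D = \G$: the elements of $\xx$ are non-degenerate by hypothesis (condition~1), and $\C(\xx)$ is diagonal by the observation above (condition~2); moreover the sources $\yy$ supplied in the corollary are independent and non-degenerate, matching the standing assumptions of the theorem. Applying the theorem therefore yields $m = n$ and that $\G$ is a permutation matrix. In particular every row and every column of $\G$ contains exactly one~$1$, so there is a permutation $\sigma$ with $x_i = y_{\sigma(i)}$ for all $i$.

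Finally I would conclude joint independence directly from this structure: since $\xx$ is merely a relabeling of the sources $\yy$, and $y_1, \ldots, y_n$ are jointly independent by assumption, the components $x_1, \ldots, x_m$ are jointly independent as well, which is the desired conclusion.

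As for the main obstacle, there is none of real substance---this is a genuine corollary, and essentially all of the work has already been absorbed into Theorem~\ref{thm:perm}. The only point requiring care is the reduction step, namely confirming that in the binary setting pair-wise independence and a diagonal covariance matrix are the same condition (so that the theorem applies with $\D=\G$), and then noting that a permutation of jointly independent variables is again jointly independent. Both are routine, so the entire difficulty lies upstream in the proof of Theorem~\ref{thm:perm} rather than in the corollary itself.
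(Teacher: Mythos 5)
Your proposal is correct and matches the paper's intended argument: the paper offers no separate proof, stating only that Theorem~\ref{thm:perm} ``necessarily implies'' the corollary, and your write-up simply fills in the routine details of that implication (pairwise independence of binary variables equals diagonal covariance, apply the theorem to get that $\G$ is a permutation matrix, and a relabeling of jointly independent sources is jointly independent).
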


\paragraph*{Identifiability}
Let $\xx= [x_1, \ldots, x_m]^T$.
Define the set
$$Y(\xx) = \{\yy\mid \bigvee_{j=1}^{n}{(g_{ij}\wedge y_j)} = x_i, \forall i=1, \ldots, m\}.$$
Therefore,
\beq
\begin{array}{lll}
\MP(\xx) & = & \MP(\yy \in Y(\xx)) = \sum_{\yy \in Y(\xx)} \MP(\yy) \\
         & = & \sum_{\yy \in Y(\xx)}{\prod_{i=1}^n{p_i^{y_i}(1-p_i)^{1-y_i}}}
\end{array}
\label{eq:relation}
\eeq
where $\MP(\yy)$ is the joint probability of $\yy$, and  $p_i
\stackrel{\Delta}{=} \MP(y_i = 1)$. The last equality is due to the
independence among $y_i$'s.

To see whether $\yy$ is uniquely identifiable from $\xx$, we first restrict $\G$ such that it has no identical columns,
namely, each $y_j$ contributes to  a unique set of $x_i$'s. Otherwise, if
$\G_{:,i}$ and $\G_{:,j}$ are identical, we can merge $y_i$ and $y_j$ by a new
component corresponding to $y_i \vee y_j$.  Under the restriction, we can
initialize $n = 2^m-1$ and $\G$ of dimension $m\times 2^m-1$ with rows
being all possible $n$ binary values.  The $\G$ matrix corresponds to  a
complete bipartite graph, where an edge exists between any two vertices in $U$
and $V$, respectively. For a random variable $y_j\in V$, its neighbors in $U$
is given by the non-zero entries in $\G_{:,j}$.
Thus, at most $2^m - 1$ independent components can be identified.
Given the distribution of random variables $\xx\in \{0,1\}^m$,  $2^m - 1$
equations can be obtained from \eqref{eq:relation}. As there are at most
$2^m -1$ unknowns (i.e., $p_i, i = 1,\ldots, n$), the probability of $y_j$ can
be determined if a solution exists.  To see the solution uniquely exists, we
present a constructive proof as follows.

Let $\Gg_k$, $k = 1, \ldots, 2^m-1$ be a $m$-dimension binary column vector, and the degree of
$\Gg_k$, $d(\Gg_k)$ is the number of ones in $\Gg_k$. Define the frequency function $\MF_k = \MP(\xx = \Gg_{k}) = \MP(x_i
= g_{ik}, i = 1, \ldots m)$. For each $\Gg_k$, we associate it with an
independent component $y_k$. The goal is to show that $p_k \stackrel{\Delta}{=}
\MP(y_k = 1)$ can be uniquely decided. Starting from $\Gg_{k}$ with the lowest
degree, the derivation proceeds to determine $p_k$ with increasing degree in
$\Gg_{k}$.

\noindent{\bf \textit{Basis}:} It is easy to show that $\MF_0 =
\prod_{j=1}^{2^m-1}{(1-p_j)}$. Since $p_k$'s are non-degenerate, $\MF_0 > 0$.
For $k$, s.t., $d(\Gg_k) = 1$, we have
$$
\MF_k = p_k\prod_{j=1, j \neq k}^{2^m-1}{(1-p_j)}
$$
Therefore,
\beq
\label{eq:1case}
p_k = \frac{\MF_k}{\MF_k + \MF_0}  \MF_0
\eeq

\noindent{\bf \textit{Induction}:}
Define $\Gg_i \prec  \Gg_j$ if $\Gg_i \neq  \Gg_j$, and
$\forall l$, s.t., $g_{li} = 1$, $g_{lj} = 1$.
Let $S_k$ be the set of indices $i$'s, s.t., $\MF_i \neq 0$ and $\Gg_i \prec \Gg_k$,
$\forall i \in S$.
If $S_k = \emptyset$, then \eqref{eq:1case} applies. Otherwise, we have
$$
\begin{array}{lll}
\displaystyle
& \MF_k \\ = & \prod_{j \not\in S_k, j\neq k}{(1-p_j)}\times (p_k + \\ & (1-p_k)\sum_{B \subset S_k, \bigvee_{i\in B}{\Gg_i} = \Gg_k}{\prod_{i\in B}{p_i}\prod_{i\in B-S_k}{(1-p_i)}}) \\
= & \frac{\MF_0}{(1-p_k)\prod_{j \in S_k}{(1-p_j)}}\times (p_k + \\ & (1-p_k)\sum_{B \subset S_k, \bigvee_{i\in B}{\Gg_i} = \Gg_k}{\prod_{i\in B}{p_i}\prod_{i\in B-S_k}{(1-p_i)}}) \\
= & \frac{\MF_0}{\prod_{j \in S_k}{(1-p_j)}}\times (\frac{p_k}{1-p_k} + \\ & \sum_{B \subset S_k, \bigvee_{i\in B}{\Gg_i} = \Gg_k}{\prod_{i\in B}{p_i}\prod_{i\in B-S_k}{(1-p_i)}})
\end{array}
$$
where $\bigvee_{i\in B}{\Gg_i}$ indicates the entry-wise OR of $\Gg_i$'s for $i \in B$. Let us define $L_k \stackrel{\Delta}{=} \sum_{B \subset S, \bigvee_{i\in B}{\Gg_i} = \Gg_k}{\prod_{i\in B}{p_i}\prod_{i\in B-S}{(1- p_i)}})$. Then,

\beq
\label{eq:2case}
p_k = \frac{\MF_k\prod_{i\in S_k}{(1-p_i)} - {\MF_0}L_k}{\MF_0 + \MF_k\prod_{i\in S_k}{(1-p_i)}- \MF_0 L_k}.
\eeq
%
It is easy to verify that when the $y_i$'s are non-degenerate, all the
denominators are positive. This proves that a solution to \eqref{eq:relation}
exists and is unique. However, direct application of the construction suffers
from several problems. First, all $\MF_{k}$'s need to be computed from the
data, which requires a large amount of observations.  Second, the property that
$\MF_0 \neq 0$ is very critical in estimating $p_k$'s. When $\MF_0$ is small,
it cannot be estimated reliably.  Third, enumerating $S_k$ for each $k$ is
computationally prohibitive.
\section{Inference of binary independent components}
\label{sec:algo}
In this section, we first present a motivating example which provide the intuition
for our inference scheme, and then devise an efficient iterative procedure to
estimate $p_i$'s. The key challenge lies in that both $\G$ and $\MP(\yy)$ are
unknown.  If $\G$ is given, the problem becomes trivial and can be easily
solved by directly applying Maximum-likelihood type of methods.

\paragraph*{A motivating example}
\begin{figure}[thp]
\begin{center}

\includegraphics[width=1in]{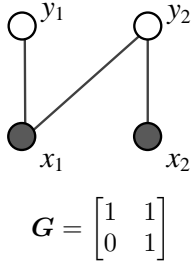}
$$
\G =
\begin{bmatrix}
1&1 \\
0&1 \\
\end{bmatrix}
$$
\caption{A simple mixture model. Hidden components are shown in
white disks and observable components are shown in black disks.}
\vspace{-1em}
\label{fig:toy}
\end{center}
\end{figure}

Consider a simple mixture model with 2 hidden sources and 2 observables as
depicted in Figure~\ref{fig:toy}.  The probability vector of the sources is $p
= [\mbox{0.2 0.5}]$ with component $y_1$ having the lower probability being
one. The marginal probabilities of $x_1 = 1$ and $x_2 = 1$ can be easily
computed as 0.6 and 0.5, respectively. Let the realizations of $y_1$ and $y_2$
over ten trials be:

\begin{center}
\vspace{-0.2in}
$$
\Y =
\begin{bmatrix}
0&0&1&0&0&0&0&0&1&0\\
0&1&1&0&0&1&1&1&0&0
\end{bmatrix},
$$
\end{center}
where $y_{it} = 1$ indicates that source $y_i = 1$ at the time slot $j$. $\Y$
is hidden and unknown. Since $\G = \begin{bmatrix} 1&1\\ 0&1 \end{bmatrix}$,
we have the observation matrix:

\begin{center}
\vspace{-0.2in}
$$
\X =
\begin{bmatrix}
0&1&1&0&0&1&1&1&1&0\\
0&1&1&0&0&1&1&1&0&0
\end{bmatrix}
$$
\end{center}

The objective of bICA  is to infer $\G$ and $p$ from $\X$. Since the number of
observables $m = 2$, we may identify at most $2^m -1 = 3$ unique sources, say,
$\hat{y}_1, \hat{y}_2, \hat{y}_3$. Denote the inferred $\G$ and $p$ as

\begin{center}
\vspace{-0.2in}
$$
\hat{\G} =
\begin{bmatrix}
1&0&1\\
0&1&1
\end{bmatrix},
\hat{p} =
\begin{bmatrix}
\hat{p}_1&\hat{p}_2&\hat{p}_3
\end{bmatrix},
$$
\end{center}
In another word, component $\hat{y}_1$ only manifests through $x_1$,
$\hat{y}_2$ through $x_2$, and $\hat{y}_3$ through both $x_1$ and $x_2$.  Clearly,
realizations in $\X$ where $x_2 = 0$ correspond to time slots when
sources $\hat{y}_2$ and $\hat{y}_3$ are both zero. In other words, we
have
$$
\MP(x_1 = 1, x_2 = 0) = \hat{p}_1(1 - \hat{p}_2)(1 - \hat{p}_3).
$$

Note that $\MP(x_2 = 0) = (1 - \hat{p}_2)(1 - \hat{p}_3)$. Therefore, we have
$\hat{p}_1 = \MP(x_1 = 1| x_2 = 0) \approx 0.2$. The last term is estimated
from the realization of $X$ where $x_2 = 0$.
Since $x_1 = 1$ if $\hat{y}_1 = 1$ or $\hat{y}_3 = 1$,
$\hat{p}_3$ can be calculated from
$$
1 - \MP(x_1 = 1) = (1 - \hat{p}_1)(1 - \hat{p}_3) \Rightarrow \hat{p}_3 = 0.5.
$$

Similarity, $\hat{p}_2$ can be calculated from
$$
1 - \MP(x_2 = 1) = (1 - \hat{p}_2)(1 - \hat{p}_3) \Rightarrow \hat{p}_2 = 0.
$$
$\hat{p}_2 = 0$ implies that $\hat{y}_2$ never activates, and thus its
associated column can be removed from  $\hat{\G}$.

The basic intuition of the above procedure is by limiting our considerations to
realizations where some observables are zero, we ``null" out the effects of
components that contribute to these observations.  This reduces the size of the
inference problem to be considered.

\paragraph*{The basic algorithm}
Motivated by the above example, we will consider $\G$ to be a $m \times 2^m$ adjacent matrix
for the complete bipartite graph. Furthermore, the columns of $\G$ are ordered
such that  $g_{kl} = 1$ if $l \wedge 2^k = 1$, for $k = 1, \ldots, m$, where $\wedge$
is the bit-wise AND operator. As an example, when $m=3$ we have:
$$
\G =
\begin{bmatrix}
0&1&0&1&0&1&0&1\\
0&0&1&1&0&0&1&1\\
0&0&0&0&1&1&1&1
\end{bmatrix}
$$
If the active probability of the $l$'th component $p_l = 0$, this implies the
corresponding column $\G_{:,l}$ can be removed from $\G$. Before proceeding to
the details of the algorithm, we first present a technical
lemma.
\begin{lem}
\label{lem:infer}
Consider a set  $\xx = [x_1, x_2, \ldots, x_{h-1}, x_h]^T$ generated by the data
model in \eqref{eq:boolean}, i.e., $\exists$ binary independent sources $\yy$,
s.t., $\xx = \G \otimes \yy$. The conditional random vector $\xx_{x_h = 0} = [x_1,
x_2, \ldots, x_{h-1}| x_h = 0]^T$ corresponds to the vector of the first $h-1$ elements of $\xx$ when $x_h =
0$. Then, $\xx_{x_h = 0} = \G '\otimes \yy'$, where $\G' = \G_{:,1 \ldots 2^{h-1}}$
(i.e. the first $2^{h-1}$ columns of $\G$)
and $\MP(y'_l = 1) =  \MP(y_l = 1)$ for $l = 1, \ldots, 2^{h-1}$.
\end{lem}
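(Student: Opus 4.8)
The plan is to exploit the special bit-indexed ordering of the columns of $\G$ so that conditioning on $x_h = 0$ simply switches off exactly the sources incident to the $h$-th observable, leaving an untouched, smaller bICA instance on the first $h-1$ observables. First I would read off the block structure of the last row of $\G$ from the stated column ordering: with $h$ rows and columns ordered so that row $k$ repeats the pattern of $2^{k-1}$ zeros followed by $2^{k-1}$ ones, row $h$ is $[\,0,\ldots,0,1,\ldots,1\,]$ with its ones occupying exactly the last $2^{h-1}$ columns (as in the displayed $m=3$ example). Thus the sources split into a \emph{low} block, the first $2^{h-1}$ columns, whose sources are incident only to $x_1,\ldots,x_{h-1}$, and a \emph{high} block, the last $2^{h-1}$ columns, whose sources are all incident to $x_h$. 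Consequently $x_h = \bigvee_{l\text{ high}} y_l$, and since an OR vanishes iff every disjunct vanishes, the event $\{x_h = 0\}$ is exactly the event that $y_l = 0$ for every high-block index $l$.

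Next, for each $i \le h-1$ I would split the disjunction \eqref{eq:boolean} defining $x_i$ into its low- and high-block contributions, $x_i = \big(\bigvee_{l\text{ low}}(g_{il}\wedge y_l)\big)\vee\big(\bigvee_{l\text{ high}}(g_{il}\wedge y_l)\big)$. On the event $\{x_h = 0\}$ every high-block source is zero, so the second disjunction is identically $0$ and $x_i$ reduces to $\bigvee_{l\text{ low}}(g_{il}\wedge y_l)$, which is precisely the $i$-th component of $\G'\otimes\yy'$ with $\G' = \G_{:,1\ldots 2^{h-1}}$ (the $h$-th entry of each low-block column being $0$, the surviving mixing is the complete-bipartite adjacency matrix for the first $h-1$ observables, so it suffices to keep its first $h-1$ rows).

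It remains to pin down the conditional law of the low-block sources, and this is the only step that needs a genuine, if short, probabilistic argument. The event $\{x_h = 0\}$ is a function of the high-block variables $\{y_l : l\text{ high}\}$ alone, whereas $\yy'$ collects the low-block variables $\{y_l : l\text{ low}\}$. Because all $y_l$ are mutually independent, the low block is independent of the event on which we condition; conditioning therefore neither introduces dependence among the low-block sources nor alters their marginals, giving $\P(y'_l = 1) = \P(y_l = 1 \mid x_h = 0) = \P(y_l = 1)$ for $l = 1,\ldots,2^{h-1}$. Combining this with the reduction of the previous paragraph yields $\xx_{x_h=0} = \G'\otimes\yy'$, as claimed. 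I expect the classification of the columns via their bit pattern to be the conceptual crux: once the low/high split of row $h$ is made explicit, both the Boolean reduction and the independence argument are routine, so the only real care is in reading the block structure correctly off the stated column ordering.
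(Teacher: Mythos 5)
Your proof is correct, and it rests on the same two facts that drive the paper's own argument: under the bit-indexed column ordering, the event $\{x_h = 0\}$ is exactly the event that every source incident to $x_h$ (your high block, the last $2^{h-1}$ columns) equals zero, and the remaining low-block sources are independent of that event. Where you differ is in execution. The paper proves the lemma computationally: it expands the joint probability $\MP(x_1,\ldots,x_{h-1},\, x_h = 0)$ as a sum over $Y(\xx)$ of products of Bernoulli terms via \eqref{eq:relation}, observes that any compatible $\yy$ must have $y_l = 0$ whenever $g_{hl} = 1$, factors out $\prod_{g_{hl}=1}(1-p_l) = \MP(x_h = 0)$, and matches the resulting quotient with the pmf of the reduced model $\G' \otimes \yy'$. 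You bypass that algebra entirely with the measure-theoretic remark that $\{x_h = 0\}$ is a function of the high-block sources alone, so conditioning on it cannot change the joint law of the low block (still independent Bernoullis with the same parameters); the Boolean reduction $x_i = \bigvee_{l\ \mathrm{low}} (g_{il} \wedge y_l)$ on that event then finishes the proof. Your route is shorter and cleaner, making explicit the independence reasoning that the paper's summation manipulations verify implicitly, and it avoids the garbled first display of the paper's proof (which equates a conditional probability with itself times a marginal). One hypothesis both arguments tacitly need, and which you could state explicitly, is $\MP(x_h = 0) > 0$ (non-degeneracy of the high-block sources), so that the conditioning is well defined.
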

\begin{proof}
We first derive the conditional probability distribution of the first $h-1$ observation variables
given $x_h = 0$,
\begin{equation}
\begin{array}{ll}
& \MP(x_1, x_2, \ldots, x_{h-1} \mid x_{h} = 0)  \\
= & \MP(x_1, x_2, \ldots, x_{h-1} \mid x_{h} = 0)\MP(x_{h}=0) \\
\stackrel{(a)}{=} & \hspace{-3mm}\displaystyle \sum_{\yy \in Y(\xx)}{\prod_{l=1}^{2^h-1}{p_{l}^{y_{l}}(1-p_{l})^{1-y_{l}}}} \\
= & \hspace{-5mm}\displaystyle \sum_{\begin{array}{l} \yy_{1..2^{h-1}}\in Y(\xx_{1..h-1})\\ y_{l} = 0, \forall g_{hl}=1 \end{array}}{\prod_{g_{hl} = 0}{p_{l}^{y_{l}}(1-p_{l})^{1-y_{l}}}\prod_{g_{hl}=1}{(1-p_{l})}}
\end{array}
\end{equation}
(a) is due to (\ref{eq:relation}). Since $\MP(x_{h}=0) = \displaystyle \prod_{g_{hl}=1}{(1-p_l)}$, we have
\begin{equation}
\begin{array}{ll}
& \MP(x_1, x_2, \ldots, x_{h-1}\mid x_{h} = 0) \\
= & \displaystyle \sum_{\yy'\in Y(\xx_{1:h-1})}{\prod_{l=1}^{2^h-1}{(p'_{l})^{y'_{l}}(1-p'_{l})^{1-y'_{l}}}} \\
= & \displaystyle \sum_{\begin{array}{l} \yy_{1, \ldots, 2^{h-1}}\in Y(\xx_{1, \ldots, h-1})\\ y_{l} = 0, \forall g_{hl}=1 \end{array}}{\prod_{g_{hl} = 0}{p_{l}^{y_{l}}(1-p_{l})^{1-y_{l}}}.}
\end{array}
\end{equation}

Clearly, by setting $\MP(y'_l = 1) =  \MP(y_l = 1)$ for $l = 1, \ldots,
2^{h-1}$, the above equality holds.  In the other word, the conditional random
vector $\xx_{x_h = 0} = \G' \otimes \yy'$ for $\G' = \G_{:,1 \ldots 2^{h-1}}$.
\end{proof}

The above lemma establishes that the
conditional random vector $\xx_{x_h = 0}$ can be represented as an OR mixing of
$2^{h-1}$ independent components. Furthermore, the set of the independent
components is the same as the first $2^{h-1}$ independent components of $\xx$
(under proper ordering).

Consider a sub-matrix of data matrix $\X$, $\X_{(h-1)\times T}^0$,
where the rows  correspond to observations of $x_1, x_2, \ldots, x_{h-1}$
for $t = 1, 2, \ldots, T$ such that $x_{ht} = 0$. Define $\X_{(h-1)\times T}$, which
consists of the first $h-1$ rows of $\X$.  Suppose that we have computed the bICA
for data matrices $\X_{(h-1)\times T}^0$ and $\X_{(h-1)\times T}$. From
Lemma~\ref{lem:infer}, we know that $\X_{(h-1)\times T}^0$ is realization of OR mixtures of
independent components, denoted by $\yy_{2^{h-1}}^0$. Furthermore,
$\MP[\yy_{2^{h-1}}^0(l) = 1] =  \MP(y_l = 1)$ for $l = 1, \ldots, 2^{h-1}$.
Clearly, $\X_{(h-1)\times T}$ is realization of OR mixtures of $2^{h-1}$ independent
components, denoted by $\yy_{2^{h-1}}$.
Additionally, it is easy to see that the following holds:
\begin{equation}
\begin{array}{ll}
& \MP[\yy_{2^{h-1}}(l)=1]\\
& = 1 - [1-\MP(\yy_{2^{h-1}}^0(l) = 1)][1-\MP(y_{l+2^{h-1}} = 1)] \\
& = 1 - (1-p_l)(1-p_{l+2^{h-1}}),
\end{array}
\label{eq:half}
\end{equation}
where $l = 1, \ldots, 2^{h-1}$. Therefore,
\begin{equation}
\begin{array}{llll}
p_{l} & = & \MP(\yy^0_{2^{h-1}}(l)=1), & l = 1, \ldots, 2^{h-1}, \\
p_{l+2^{h-1}} & = & 1-\frac{1 - \MP(\yy_{2^{h-1}}(l)=1)}{1-\MP(\yy_{2^{h-1}}^0(l) = 1)}, & l = 2, \ldots, 2^{h-1}, \\
p_{2^{h-1} + 1} & = & \frac{\MF{(x_h=1 \wedge x_i=0, \forall i \in [1 \ldots h-1])}}
{\prod_{l=1 \ldots 2^h, l \neq 2^{h-1}-1}{(1-p_l)}}.
\end{array}
\label{eq:half2}
\end{equation}
The last equation above holds because realizations of $\xx$ where ($x_k=1$
while $x_i=0; \forall i \in \set{0,\ldots,k-1}$) are generated from OR mixtures of $\yy_{2^{k-1}}$'s only.

Let $\MF(A)$ be the frequency of event $A$,
we have the iterative inference algorithm as illustrated in Algorithm~\ref{algo:inc}.

\begin{algorithm}[h]
\caption{Incremental binary ICA inference algorithm}
\small
\label{algo:inc}
\SetKwData{Left}{left}
\SetKwInOut{Input}{input}
\SetKwInOut{Output}{output}
\SetKwInOut{Init}{init}
\SetKwFor{For}{for}{do}{endfor}
\SetKwFunction{FindBICA}{FindBICA}
\FindBICA($\X$)\\
\Input{Data matrix $\X_{m\times T}$}
\Init{$n = 2^{m} -1$\; $p_h = 0, h = 1, \ldots, n$\;
$\G$ = $m \times (2^{m} -1)$ matrix with rows corresponding all possible binary vectors of length $m$\;
$\varepsilon$ = the minimum threshold for $p_h$ to be considered a real component;} \BlankLine

\nl\eIf{$m = 1$} {\label{line1}
\nl    $p_1 = \MF(x_1 = 0)$\;\label{line2}
\nl    $p_2 = \MF(x_1 = 1)$\;\label{line3}
    } {
\nl    \eIf{$\X^0_{(m-1)\times T} = \emptyset$} {\label{line12}
\nl        $p_{1 \ldots 2^{m-1}}$ = \FindBICA($\X_{(m-1)\times T}$)\;\label{line13}
\nl        $p_{2^{m-1}+1} = 1$\;\label{line14}
\nl        $p_{2^{m-1}+2\ldots 2^m} = 0$\;\label{line15}
        } {
\nl    $p_{1 \ldots 2^{m-1}}$ = \FindBICA($\X^0_{(m-1)\times T}$)\;\label{line4}
\nl    $p'_{1 \ldots 2^{m-1}}$ = \FindBICA($\X_{(m-1)\times T}$)\;\label{line5}
\nl    \For{$l = 2, \ldots, 2^{m-1}$}{\label{line6}
\nl        $p_{l+2^{m-1}} = 1-\frac{1-p'_l}{1-p_l}$\;\label{line7}
       }
\nl    $p_{2^{m-1}+1} = \frac{\MF{(x_m=1 \wedge x_i=0, \forall i \in [1 \ldots m-1])}}{\prod_{l=1 \ldots 2^m-1, l \neq 2^{m-1}+1}{(1-p_l)}}$\;\label{line8}
}
}
\nl\For{$h = 1, \ldots, 2^{m}$}{\label{line9}
\nl    \If{$(p_h < \varepsilon) \vee (p_h = 0)$}{\label{line10}
\nl            prune $p_h$ and corresponding columns $g_{h}$\;\label{line11}
        }
}
\nl\Output{$p$ and $\G$}
\end{algorithm}
\normalsize

When the number of observation variables $m=1$, there are only two possible unique sources,
one that can be detected by the monitor $x_1$, denoted by [1]; and one that cannot,
denoted by [0]. Their active probabilities can easily be calculated by counting
the frequency of $(x_1 = 1)$ and $(x_1 = 0)$ (lines~\ref{line1} --
\ref{line3}). If $m \ge 2$, we apply Lemma~\ref{lem:infer} and \eqref{eq:half2}
to estimate $p$ and $\G$ through a recursive process.
$\X^0_{(m-1)\times T}$ is sampled from columns of $\X$ that have $x_m = 0$. If $\X^0_{(m-1)\times T}$
is an empty set (which means $x_{mt} = 1, \forall t$) then we can associate $x_m$
with a constantly active component and set the other components' probability accordingly
(lines~\ref{line12} -- \ref{line15}). If $\X^0_{(m-1)\times T}$ is non-empty, we invoke {\sc FindBICA} on
two sub-matrices $\X^0_{(m-1)\times T}$ and $\X_{(m-1)\times T}$ to determine
$p_{1 \ldots 2^{m-1}}$ and $p'_{1 \ldots 2^{m-1}}$, then
infer $p_{2^{m-1}+1 \ldots 2^{m}}$ as in (\ref{eq:half2}) (lines~\ref{line6} --
\ref{line8}). Finally, $p_h$ and its corresponding column $g_h$ in $\G$ are
pruned in the final result if $p_h < \varepsilon$ (lines~\ref{line9} --
\ref{line11}).

\paragraph*{Reducing computation complexity}
Let $S(m)$ be the computation time for finding bICA given $\X_{m\times T}$. From
Algorithm~\ref{algo:inc}, we have, $$S(m) = 2S(m-1) + c2^m,$$ where $c$ is a constant. It is
easy to verify that $S(m) = cm 2^m$.  Therefore, Algorithm~\ref{algo:inc} has an
exponential computation complexity with respect to $m$. This is clearly
undesirable for large $m$'s. However, we notice that in practice, correlations
among $x_i$'s exhibit locality, and the $\G$ matrix tends to be sparse. Instead
of using a complete bipartite graph to represent $\G$, the degree of vertices in
$V$ (or the number of non-zero elements in $\G_{:,k}$) tend to be much less than
$m$. In what follows, we discuss a few techniques to reduce the computation complexity by discovering
and taking advantage of the sparsity of $\G$. We first establish a few technical
lemmas.

\begin{lem}
If $x_i$ and $x_k$ are uncorrelated, then $\MP(y_l = 1) = 0$, $\forall l$ s.t.,
$g_{il} = 1$ and $g_{kl} = 1$.
\label{lem:5}
\end{lem}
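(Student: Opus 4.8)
The plan is to mirror the source-partition argument from the proof of Theorem~\ref{thm:perm}, but to extract a closed-form expression for the covariance rather than merely arguing by contradiction. Fix the pair $x_i, x_k$ and split the sources feeding into them into three disjoint groups exactly as in that proof: the sources common to both rows $\G_{i,:}$ and $\G_{k,:}$, whose OR-mixture I call $u$; the sources appearing only in $\G_{i,:}$, whose OR-mixture I call $v_1$; and the sources appearing only in $\G_{k,:}$, whose OR-mixture I call $v_2$. Then $x_i = u \vee v_1$ and $x_k = u \vee v_2$. Because these three groups are disjoint subsets of the jointly independent sources $\yy$, iterating Lemma~\ref{lem:2} shows that $u$, $v_1$, and $v_2$ are mutually independent binary variables.

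Next I would compute the covariance directly. Writing $a = \MP(u=0)$, $b = \MP(v_1 = 0)$, and $c = \MP(v_2 = 0)$, independence gives $\MP(x_i = 0) = ab$, $\MP(x_k = 0) = ac$, and $\MP(x_i = 0, x_k = 0) = abc$. Since $x_i, x_k$ are $\{0,1\}$-valued, $\mathrm{Cov}(x_i, x_k) = \MP(x_i=1,x_k=1) - \MP(x_i=1)\MP(x_k=1)$, and applying inclusion--exclusion to $\MP(x_i=1,x_k=1) = 1 - \MP(x_i=0) - \MP(x_k=0) + \MP(x_i=0,x_k=0)$ yields, after simplification,
\begin{equation}
\mathrm{Cov}(x_i, x_k) = a(1-a)\,bc .
\end{equation}
This single identity recovers the qualitative conclusions of Lemmas~\ref{lem:2} and \ref{lem:corr} as special cases and makes the dependence on the shared group transparent through the factor $a(1-a)$.

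Finally, I would set $\mathrm{Cov}(x_i, x_k) = 0$ and read off the conclusion. Since $x_i$ and $x_k$ are non-degenerate observables, $\MP(x_i=0) = ab \in (0,1)$ and $\MP(x_k=0) = ac \in (0,1)$, forcing $a, b, c$ all strictly positive. Hence $a(1-a)bc = 0$ can hold only if $a(1-a) = 0$, and since $a \neq 0$ we conclude $a = 1$. But $a = \MP(u = 0) = \prod_{l:\, g_{il}=g_{kl}=1}(1-p_l)$ with $p_l = \MP(y_l = 1)$; a product of factors each at most $1$ equals $1$ only if every factor equals $1$, i.e. $p_l = 0$ for every shared index $l$, which is exactly the claim. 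The main obstacle is the bookkeeping around degeneracy: the factorization also makes $\mathrm{Cov}$ vanish when $a=0$, $b=0$, or $c=0$, i.e. when $x_i$ or $x_k$ is constant, so the implicit assumption that the observables are non-degenerate is essential---without it a constant $x_i$ would be uncorrelated with $x_k$ while still sharing an active source. I would therefore make this non-degeneracy hypothesis explicit (it is the standing assumption on the observables throughout the paper) before invoking the argument.
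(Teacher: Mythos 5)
Your proof is correct, and it departs from the paper's argument in a meaningful way. The paper uses the identical three-way split of the sources into $u$, $v_1$, $v_2$, but then proceeds by contradiction: if some shared source had $\MP(y_l=1)\neq 0$, the mixture $u$ of the shared sources would be non-degenerate, and Lemma~\ref{lem:corr} (correlation of $u\vee v_1$ and $u\vee v_2$) would make $x_i$ and $x_k$ correlated, contradicting the hypothesis. You instead inline and generalize that step by computing the covariance in closed form, $\mathrm{Cov}(x_i,x_k)=a(1-a)bc$ with $a=\MP(u=0)$, $b=\MP(v_1=0)$, $c=\MP(v_2=0)$, and reading the conclusion off in a forward direction: uncorrelatedness plus non-degeneracy of the observables forces $a=1$, hence $p_l=0$ for every shared index. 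What your route buys is twofold. First, the identity subsumes Lemma~\ref{lem:2} and Lemma~\ref{lem:corr} as the special cases $a=1$ and $a\in(0,1)$, $b,c>0$, so the argument is self-contained. Second, it isolates exactly the degenerate configurations ($a=0$, $b=0$, or $c=0$, i.e.\ a constant observable) in which the lemma as stated actually fails---e.g.\ if a shared source has $p_l=1$, then $x_i\equiv x_k\equiv 1$ are uncorrelated yet share an active source. The paper's proof silently assumes these cases away: Lemma~\ref{lem:corr} requires $u$, $v_1$, and $v_2$ all non-degenerate, and the paper never verifies this for $v_1,v_2$, which is impossible without the non-degeneracy hypothesis you make explicit. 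The one loose point in your write-up, shared with the paper, is justifying $\MP(x_i=0,x_k=0)=abc$: this needs mutual (not just pairwise) independence of $u,v_1,v_2$, which follows from the standard grouping property of independence for functions of disjoint sets of independent variables rather than from iterating the pairwise statement of Lemma~\ref{lem:2}; it would be cleaner to cite that fact directly.
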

\begin{proof}
We prove by contradiction. Suppose $\exists l$, s.t., $g_{il} = 1$, $g_{kl} =
1$, and $\MP(y_l = 1) = 0$. Denote the $l$'s by a set $L$. Let $u = \wedge_{l \in L}y_l$.
From the assumption, $u$ is non-degenerate.  Without loss of generality, we can
represent $x_i$ and $x_k$ as
\begin{eqnarray}
\nonumber
x_i = u \vee v_1 \\
\nonumber
x_k = u \vee v_2
\end{eqnarray}
where $v_1$ and $v_2$ are disjunctions of remaining non-overlapping components in $x_i$ and $x_k$, respectively
From Lemma~\ref{lem:corr}, we know that $x_i$ and $x_k$ are correlated. This
contradicts the condition.
\end{proof}

\begin{lem}
Consider the conditional random vector $\xx_{x_k = 0} = [x_1, x_2, \ldots,
x_{k-1}| x_k = 0]^T$ from a set  $\xx = [x_1, x_2, \ldots, x_{k-1}, x_k]^T$
generated by the data model in \eqref{eq:boolean}. If $x_i$ and $x_k$ are
uncorrelated, $\xx_{x_k = 0}(i)$ and $\xx_{x_k = 0}(k)$ are uncorrelated.
\label{lem:6}
\end{lem}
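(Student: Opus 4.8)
The plan is to reduce uncorrelatedness in an OR mixture to a purely combinatorial condition on the mixing matrix, and then observe that conditioning on $x_k=0$ only deletes sources and never introduces new shared ones, so that this combinatorial condition is preserved. Concretely, I first want to record the characterization implicit in Lemmas~\ref{lem:5}, \ref{lem:corr} and~\ref{lem:2}: two observables $x_a$ and $x_b$ generated by \eqref{eq:boolean} are uncorrelated if and only if they share no common active source, i.e.\ there is no index $l$ with $g_{al}=g_{bl}=1$ and $\MP(y_l=1)>0$. The ``only if'' direction is exactly Lemma~\ref{lem:5} (an uncorrelated pair forces every shared source to have $\MP(y_l=1)=0$), while the ``if'' direction follows by writing $x_a$ and $x_b$ as disjunctions over disjoint groups of non-degenerate sources and applying Lemma~\ref{lem:2} iteratively, exactly as in the proof of Theorem~\ref{thm:perm}.

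Second, I would invoke Lemma~\ref{lem:infer} to describe the conditional model. After relabelling so that $x_k$ plays the role of the last coordinate, Lemma~\ref{lem:infer} gives $\xx_{x_k=0}=\G'\otimes\yy'$, where $\G'$ retains precisely the columns $l$ of $\G$ with $g_{kl}=0$, and the surviving sources keep their original marginals, $\MP(y'_l=1)=\MP(y_l=1)$. In particular the conditional observable $\xx_{x_k=0}(i)$ is the OR of the sources $\{\,y_l : g_{il}=1,\ g_{kl}=0\,\}$, a \emph{subset} of the sources feeding $x_i$ in the original model, and likewise for the other coordinate.

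Third, I would combine the two ingredients. By hypothesis the two observables are uncorrelated, so by the characterization above every index $l$ that they both select satisfies $\MP(y_l=1)=0$. Any source shared by the two conditional observables must also be shared by the original observables (we only removed columns), and its marginal is unchanged by Lemma~\ref{lem:infer}; hence it too has probability $0$ and is inactive. Thus the conditional observables share no common active source, and the ``if'' direction of the characterization---applied inside the conditional OR mixture, whose sources are still independent by Lemma~\ref{lem:infer}---yields that they are uncorrelated, as claimed.

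The step I expect to need the most care is the second one: Lemma~\ref{lem:infer} is stated for the specific bit-ordering of the columns of $\G$ and for conditioning on the last coordinate, so I must verify that, for an arbitrary conditioning variable $x_k$, conditioning on $x_k=0$ indeed deletes exactly the columns with $g_{kl}=1$ and leaves the marginals of the remaining sources untouched. Once that monotone ``source-deletion'' picture is established, the fact that deleting sources can never manufacture a new shared active source makes the preservation of uncorrelatedness immediate.
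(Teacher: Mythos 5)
Your proof is correct and takes essentially the same route as the paper, whose entire proof of this lemma is the single sentence ``This lemma is a direct consequence of Lemma~\ref{lem:infer} and Lemma~\ref{lem:5}''; your three steps (the shared-active-source characterization via Lemma~\ref{lem:5} and Lemma~\ref{lem:2}, the source-deletion picture from Lemma~\ref{lem:infer}, and the combination) are exactly the argument that one-liner leaves implicit. The only caveat is notational: the lemma as stated is itself garbled (the conditional vector has no $k$-th coordinate since $x_k$ is the conditioning variable), and you correctly work around this by proving the intended, index-generic claim that conditioning on one observable being zero preserves uncorrelatedness of any other pair.
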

\begin{proof}
This lemma is a direct consequence of Lemma~\ref{lem:infer} and Lemma~\ref{lem:5}.
\end{proof}

Lemma~\ref{lem:5} implies that pair-wise independence can be used to eliminate
edges/columns in $\G$. Lemma~\ref{lem:6} states that the pair-wise independence
remains true for conditional vectors. Therefore, we can treat the conditional
vectors similarly as the original ones.

We also observe that for $\xx = [x_1, x_2, \ldots, x_{k-1}, x_k]^T$ if
\eqref{eq:cond} holds then there does not exist an independent component that
generates $x_k$ and some of $x_j$, $j = 1, 2, \ldots k-1$. In other words,
$x_k$ is generated by a ``separate" independent component.
\begin{equation}
\MP(x_1, x_2, \ldots, x_{k-1}, x_k) = \MP(x_1, x_2, \ldots, x_{k-1})P(x_k)
\label{eq:cond}
\end{equation}
Finally from \eqref{eq:half}, we see that $\MP(y_{k-1}(l) = 1) \ge \max(p_l,
p_{l+2^{k-1}})$. Note that $\MP(y_{k-1}(l) = 1)$ is inferred from
$\X_{(k-1)\times T}$, while the latter two are for $\X_{k\times T}$.  This
property allows us to prune the $\G$ matrix along with the iterative procedure
(as opposed to at the very end).

Now we are in the position to outline our complexity reduction techniques.
\begin{itemize}
\item[T1] For every pair $i$ and $k$, compute $$Cov(i,k) =
\frac{\sum_{t}^T{\X_{it}\X_{kt}}}{T} -
\frac{\sum_{t}^T{\X_{it}}}{T}\frac{\sum_{t}^T{\X_{kt}}}{T}.$$
Let the associated $p$-value be $p(i,k)$. The basic idea of $p$-value is to use
the original paired data $(\X_i, \Y_i)$, randomly redefining the pairs to create a
new data set and compute the associated $r$-values. The p-value for the
permutation is proportion of the $r$-values generated that are larger than that
from the original data.  If $p(i,k) > \epsilon$, where $\epsilon$ is a small
value (e.g., 0.05), we can remove the corresponding columns in $\G$ and
elements in $\yy$.
\item[T2] We can determine the bICA for each sub-vector separately if the following holds,
$$\MP(x_1, \ldots, x_k) = \MP(x_1, \ldots, x_{l})\MP(x_{1+1}, \ldots, x_{k}).$$
\item[T3] If the probability of the $i$'th component of $\X_{k\times T}$ $p_i  <
\epsilon$, then $\forall j$, s.t., $\G_{:,i} \prec \G_{:,j}$, the probability of
the $j$'th component of $\X_{k'\times T}$ $p_j < \epsilon$ for $k' > k$. In
another word, these columns and corresponding components can be eliminated.
\end{itemize}
From our evaluation study, we find the computation time is on the order of seconds for a problem size $m = 20$ on a regular desktop PC.
\section{The Inverse Problem}
\label{sec:inverse}
Now we have the mixing matrix $\G_{m\times n}$ and the active probabilities
$\MP(\yy)$, given observation $\X_{m\times T}$, the inverse problem concerns
inferring the realizations of the latent variables $\Y_{n\times T}$.
%
Recall that $n$ is the number of latent variables.
Denote $y_i$ to be the binary variable for the
$i$'th latent variable. Let $\xx = \G \otimes \yy$.  We assume that the probability of
observing $\X$ given $\xx$ depends on their Hamming distance $d(\xx, \X) =
\sum_{i}{|\X_i-x_i|}$, and $\MP(\xx|\X) = p_e^{d(\xx, \X)}(1-p_e)^{m-d(\xx, \X)}$, where $p_e$ is the
error probability of the binary symmetric channel. To determine $\yy$, we
can maximize the posterior probability of $\yy$ given $\X$ derived as follows,
$$
\begin{array}{lll}
\MP\{\yy|\X\} & = & \frac{\MP\{\X|\yy\}\MP\{\yy\}}{\MP\{\X\}} \\
         & = & \frac{\MP\{\X|\yy\}\MP\{\yy\}}{\MP\{\X\}} \\
         & \stackrel{(a)}{=} & \frac{\MP\{\X,\xx|\yy\}\MP\{\yy\}}{\MP\{\X\}} \\
         & \stackrel{(b)}{=}& \frac{\MP\{\X|\xx\}\MP\{\yy\}}{\MP\{\X\}} \\
         & = & \frac{\prod_{i=1}^{m}{\MP\{\X_i|x_i\}}\prod_{j=1}^{n}{\MP\{y_i\}}}{\MP\{\X\}} \\
         & = & \frac{\prod_{i=1}^{m}{ p_e^{|x_i- \X|}(1-p_e)^{1-|x_i-\X|}}\prod_{j=1}^{n}{p_i^{y_i}(1-p_i)^{1-y_i}}}{\MP\{\X\}}
\end{array}
$$
where $\xx = \G \otimes \yy$.  $(a)$ and $(b)$ are due to the deterministic
relationship between $\xx$ and $\yy$.  
{Recall that $x_i = \bigvee_{j=1}^{n}{(g_{ij}\wedge y_j)}, i = 1, \ldots, m$.
With $M$ is a ``large enough'' constant, we can use big-$M$ formulation}~\cite{Griva2008Linear}
{to relax the disjunctive set and convert the above relationship between $\xx$ and
$\yy$ into the following two sets of conditions:}
\begin{equation}
\begin{array}{llll}
 \displaystyle x_i & \le & \sum_{j=1}^{n}{g_{ij}y_{j}}, \mbox{ } \forall i=1,\ldots,m. \\
 \displaystyle M\cdot x_i & \ge & \sum_{j=1}^{n}{g_{ij}y_{j}}, \mbox{ } \forall i=1,\ldots,m. \\
\end{array}
\end{equation}
Here, since $\sum_{j=1}^{n}{g_{ij}y_{j}} \le n$, we can set $M = n$.
Finally, taking $\log$ on both sides  and introducing additional auxiliary variable
$\alpha_i = |\X_i - x_i|$, we {have} the the following integer programming
problem:
\begin{equation}
\begin{array}{lll}
\underset{\alpha,y}{\max.}& \displaystyle \sum_{i=1}^m{\left[\alpha_i\log{p_e} + (1-\alpha_i)\log(1-p_e)\right]}\\
    & + \sum_{j=1}^n{\left[(1-y_j)\log{(1-p_j)} + y_j\log{p_j}\right]} \\
\mbox{s.t.}& \displaystyle x_i \le \sum_{j=1}^{n}{g_{ij}y_{j}}, \hspace{10mm} \forall i=1,\ldots,m, \\
    & \displaystyle n\cdot x_i \ge \sum_{j=1}^{n}{g_{ij}y_{j}}, \hspace{5.5mm} \forall i=1,\ldots,m, \\
    & \alpha_i \ge \X_i - x_i, \hspace{11mm} \forall i=1,\ldots,m, \\
    & \alpha_i \ge x_i - \X_i, \hspace{11mm} \forall i=1,\ldots,m, \\
    & \alpha_i, x_i, y_j = \set{0,1}, \hspace{4.5mm} \forall i=1,\ldots,m, j=1,\ldots, n. \\
\end{array}
\label{eq:IPP}
\end{equation}
This optimization function can be solved using ILP solvers. Note that $p_e$ can
be thought of the penalty for mismatches between $x_i$ and $\X_i$.
\normalsize
\paragraph*{Zero Error Case}
If $\X$ is perfectly observed, containing no noise,
we have $p_e = 0$ and $\alpha_i = \xx_i - \X_i = 0$, or equivalently,
$\xx_i = \X_i$. The integer programming problem in \eqref{eq:IPP}
can now be simplified as:
\begin{equation}
\begin{array}{lll}
\underset{y}{\max.}& \displaystyle \sum_{j=1}^n{\left[(1-y_j)\log{(1-p_j)} + y_j\log{p_j}\right]} \\
\mbox{s.t.}& \displaystyle \X_i \le \sum_{j=1}^{n}{g_{ij}y_{j}}, \hspace{10mm} \forall i=1,\ldots,m, \\
    & \displaystyle n\cdot \X_i \ge \sum_{j=1}^{n}{g_{ij}y_{j}}, \hspace{5mm} \forall i=1,\ldots,m, \\
    & y_j = \set{0,1}, \hspace{15mm} \forall j=1,\ldots, n. \\
\end{array}
\label{eq:IPPSim}
\end{equation}

Clearly, the computation complexity of the zero error case is lower
compared to \eqref{eq:IPP}. It can also be used in the case where
prior knowledge regarding the noise level is not available.

\section{Evaluation}
\label{sec:eval}
In this section, we first introduce performance metrics used for the evaluation
of the proposed method, and then present its performance under different
network topologies with different level of observation noises. In evaluating
the structural errors of bICA, we also make an independent contribution by
devising a matching algorithm for two bipartite graphs.

We compare the
proposed algorithm with the Multi-Assignment Clustering (MAC)
algorithm~\cite{Streich09}. The source code of MAC was obtained from the
authors' web site.  As shown in Table~\ref{tab:comparison}, none of existing algorithms
follows exactly the same model and/or constraints as bICA.  For instance, MAC
assumes the knowledge of the dimension of latent variables. Therefore, the
comparisons bias against our proposed algorithm.

%
%
%
\subsection{Evaluation metrics}
We denote by $\hat{p}$ and $\hat{\G}$ the inferred active probability of
latent variables and the inferred mixing matrix, respectively.
%
%
\subsubsection{In Degree Error \& Structure Error}
Let $\|\cdot\|_1$ be the 1-norm defined by $\|x\|_1 =
\sum_{i=1}^{n}\sum_{j=1}^{m}{|x_{ij}|}$. $diag(\cdot)$ and
$triu(\cdot)$ denote the main diagonal and the
upper triangular portion of a matrix, respectively.
Two metrics are introduced in \cite{computer06anon-parametric} to evaluate the dissimilarity of $\G$ and $\hat{\G}$.
{\em In degree error} $E_d$ is the difference between the true
in-degree of $\G$ and the expected in-degree $\hat{\G}$. It is
computed by taking the sum absolute difference between
$diag(\G\G^T)$ and $diag(\hat{\G}\hat{\G}^T)$ as the following:
\beq
\begin{array}{lll}
E_d & \stackrel{\Delta}{=} & \|diag(\G\G^T) -
diag(\hat{\G}\hat{\G}^T)\|_1
\end{array}
\eeq
The second metric,
{\em structure error} $E_s$ is the sum of absolute
difference between the upper triangular portion of $\G\G^T$ and
$\hat{\G}\hat{\G}^T$, defined as:
\beq
\begin{array}{lll}
E_s & \stackrel{\Delta}{=} & \|triu(\G\G^T) -
triu(\hat{\G}\hat{\G}^T)\|_1
\end{array}
\eeq
Since each element of the upper triangular
portion of $\G\G^T$ is a count of the number of hidden causes
shared by a pair of observable variables, the sum difference is
a general measure of graph dissimilarity.
\subsubsection{Normalized Hamming Distance}
This metric indicates how accurate the mixing matrix is estimated. It is defined by the Hamming
distance between $\G$ and $\hat{\G}$ divided by its size.
\beq
\begin{array}{lll}
\bar{H}_g & \stackrel{\Delta}{=} & \frac{1}{mn} \sum_{i=1}^{n} d^H(\G_{:,i},\hat{\G}_{:,i}).
\end{array}
\eeq
To estimate $\bar{H}_g$ however, two challenges remain:
First, the number of inferred independent
components may not be identical as the ground truth. Second, the order of
independent components in $\G$ and $\hat{\G}$ may be different.

To solve the first problem, we can either prune $\hat{\G}$ or introduce columns into $\G$
to equalize the number of components ($n = \hat{n}$, where $\hat{n}$ is the number of columns
in $\hat{\G}$). For the second problem, we propose a matching algorithm that minimizes the
Hamming distance between $\G$ and $\hat{\G}$ by permuting the {corresponding} columns in $\hat{\G}$.

\paragraph*{Structure Matching Problem}
A naive matching algorithm needs to consider all $\hat{n}!$
column permutations of $\hat{\G}$, and chooses the one that has the minimal Hamming
distance to $\G$. This approach incurs an exponential computation complexity. Next,
we first formulate the best match as an Integer Linear Programming problem. Denote the
Hamming distance between column $\hat{\G}_{:,i}$ and $\G_{:,j}$ as
$c_{ij} \geq 0$. Define a permutation matrix $\A_{n \times n}$ with $a_{ij} = 1$
indicating that the $i$'th column in $\hat{\G}$ is matched with the $j$'th column in $\G$.
The problem now is to find a permutation matrix such that the total
Hamming distance between $\G$ and $\hat{\G}$ (denoted by $d^H(\G,\hat{\G})$)
is minimized. We can formulate this problem as an ILP as follows:
\begin{equation}
\begin{array}{ll}
\underset{a}{\min.}& \displaystyle \sum_{i=1}^n{\sum_{j=1}^n{c_{ij} a_{ij}}} \\
\mbox{s.t.}& \displaystyle \sum_{i=1}^n{a_{ij}} = 1, \\
    & \displaystyle \sum_{j=1}^n{a_{ij}} = 1, \\
    & a_{ij} = \set{0,1}, \hspace{5mm} \forall i,j=1, \ldots ,n.
\end{array}
\label{eq:pgmatching}
\end{equation}
\begin{algorithm}[tp]
\caption{Bipartite graph matching algorithm} \small
\label{algo:pg} \SetKwData{Left}{left} \SetKwInOut{Input}{input}
\SetKwInOut{Output}{output} \SetKwInOut{Init}{init}
\SetKwFor{For}{for}{do}{endfor} \SetKwFunction{Hungarian}{BipartiteMatching}
\SetKwFunction{MatchPG}{MatchPG}
\MatchPG($\G$, $\hat{\G}$, $\hat{p}$)\\
\Input{$\G_{m \times n}$, $\hat{\G}_{m \times \hat{n}}$, $\hat{p}_{1 \times \hat{n}}$; ($n \le \hat{n} \le 2^{m}$)}
\Init{$\hat{\G}'_{m \times \hat{n}} = 0$; $\hat{p}'_{1 \times \hat{n}} = 0$; $\C_{\hat{n} \times \hat{n}} = 0$\;}
\BlankLine
\nl\For{$i = 1, \ldots, \hat{n}$}{\label{Line1}
\nl    \For{$j = 1, \ldots, \hat{n}$}{\label{Line2}
\nl        \eIf{$g_{i} = 0$} {\label{Line3}
\nl            $c_{ij} = d^H(\G_{:,i}, \hat{\G}_{:,j}) \times m$\;\label{Line4}
        } {
\nl            $c_{ij} = d^H(\G_{:,i}, \hat{\G}_{:,j})$\;\label{Line5}
        }
    }
}
\nl$\A$ = \Hungarian($\C$)\;\label{Line6}
\nl\For{$i = 1, \ldots, \hat{n}$}{\label{Line7}
\nl    find $j$ such that $a_{ij} = 1$\;\label{Line8}
\nl    $\hat{\G}'_{:,i} = \hat{\G}_{:,j}$\;\label{Line9}
\nl    $\hat{p}'_i = \hat{p}_j$;\label{Line10}
}
\nl Prune $\hat{\G}'$: $\hat{\G}' = \hat{\G}'_{:,1 \ldots n}$\;\label{Line11}
\nl Prune $\hat{p}'$: $\hat{p}' = \hat{p}'_{1 \ldots n}$\;\label{Line12}
\nl\Output{$\hat{\G}'$ and $\hat{p}'$}\label{Line13}
\end{algorithm}
\normalsize

The constraints ensure the resulting $\A$ is a permutation matrix. This problem
 can be solved using ILP solvers. However, we observe that the ILP is equivalent to
a maximum-weight bipartite matching problem. In the bipartite graph, the vertices
are positions of the columns, and the edge weights are the Hamming distance of the
respective columns. If we consider $d^H(\G_{:,i},\hat{\G}_{:,i})$, the Hamming distance between column $\G_{:,i}$
and $\hat{\G}_{:,i}$, to be the ``cost'' of matching $\hat{\G}_{:,i}$
to $\G_{:,i}$, then the maximum-weight bipartite matching problem can be solved
in $O(n^3)$ running time~\cite{KuhnHungarian}, where $n$ is
the number of vertices. The algorithm requires $\G$ and $\hat{\G}$ to have the same
number of columns.

One {greedy} solution is to prune $\hat{\G}$ by selecting the top $n$ components from
$\hat{\G}$, which have the highest associated probabilities $\hat{p}_{i}$ since
they are the most likely true components. However, when $T$ is small and/or under
large noise, we may not have sufficient observations to correctly identify
components in $\G$ with high confidence. As a result, true components might
have lower active probabilities comparing to the noise components. To address
the problem, we instead keep a larger $\hat{n}$ and introduce $\hat{n} - n$
artificial components into $\G$. These components will be represented by zero
columns in $\G$.
While matching the inferred columns in $\hat{\G}$ to the columns in $\G$, clearly an
undesirable scenario occurs when we accidently match a column in $\hat{\G}$ to an
additive zero column in $\G$. This happens when an inferred column $\hat{\G}_{:,i}$ is sparse
(i.e. having a very small Hamming distance to the zero column). To avoid the
incident, we multiply the cost of matching any column in $\hat{\G}$ to
a zero column in $\G$ by $m$. This eliminates the case {in which} a column $\hat{\G}_{:,i}$
is matched with a zero column in $\G$, since it is more expensive than matching
with another non-zero column $\G_{:,i}$.
We can now select the best $n$ candidates in $\hat{\G}$, which yields
a reduced mixing matrix $\hat{\G}'$ of size $m \times n$, and elements in active
probability vector $\hat{p}'$ will also be selected accordingly. The solution
to the structure matching problem is detailed in Algorithm \ref{algo:pg}.

\begin{figure}[tp]
\begin{center}
\hspace{-0.4in}\includegraphics[width=3.2in]{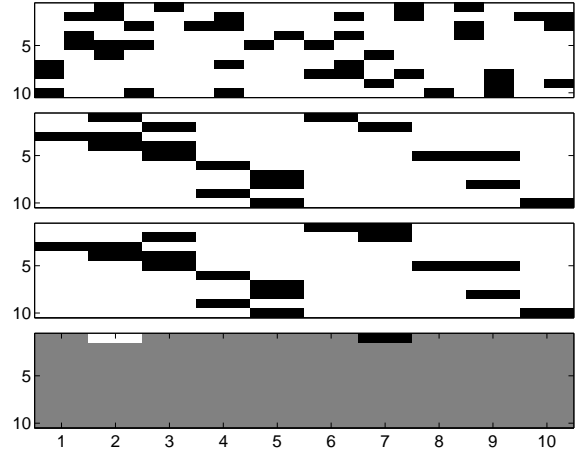}
\end{center}
\caption{From top to bottom: inferred matrix $\hat{\G}$ with 18
inferred components, transformed matrix $\hat{\G}'$ with only 10
components remaining (8 noisy ones were removed), original $\G$
and difference matrix between $\G$ and $\hat{\G}'$. Black dot = 1
and white dot = 0.} \label{fig:illus}
\end{figure}

In the algorithm,  lines~\ref{Line1} -- \ref{Line5} build the input weight matrix
$\C_{\hat{n} \times \hat{n}}$ for the bipartite matching algorithm. If $\G_{:,i}$ is
a zero column, $c_{ij}$ will be scaled by $m$ to avoid the matching
between column $\G_{:,i}$ and $\hat{\G}_{:,j}$ (line~\ref{Line4}). The bipartite matching algorithm
finds the optimal permutation matrix $\A$
to transform $\hat{\G}$ into $\hat{\G}'$ that is ``closest'' to $\G$ (lines
\ref{Line6} -- \ref{Line10}). We are only interested in the first $n$ columns
of $\hat{\G}'$ and $\hat{p}'$ (as they most likely represent the true PUs).
Therefore, $\hat{\G}'$ and $\hat{p}'$ are pruned in lines~\ref{Line11} -- \ref{Line13}.

As an example, the inferred result of a random network with $n = m = 10$ is given in Figure~\ref{fig:illus}.
Non-zero entries and zero entries of $\G$,
$\hat{\G}$, and $\hat{\G}'$ are shown as black and white dots,
respectively. The entry-wise difference matrix $|\G-\hat{\G}'|$ is
given in the bottom graph. {Gray dots in the difference matrix
indicate identical entries in the inferred $\hat{\G}$ and the original
$\G$}; and black dots indicate different entries (and thus errors in the inferred matrix).
In this case, only the first row
(corresponding to the first monitor $x_1$) contains some errors.
\subsubsection{Transmission Probability Error}
The prediction error in the inferred transmission probability of
independent users is measured by the Kullback-Leibler divergence
between two probability distributions $p$ and $\hat{p}$. Let $p'$ and $\hat{p}'$
denotes the ``normalized'' $p$ and $\hat{p}$ ($p'_i = p_i\sum_{i=1}^{n}p_i$),
Transmission Probability Error is defined as below (the K-L distance):
\beq
\begin{array}{lll}
\bar{P}(p',\hat{p}') & \stackrel{\Delta}{=} & \sum_{i=1}^{n}p_i\log(\frac{p_i'}{\hat{p_i}'}).
\end{array}
\eeq
Intuitively, Transmission Probability Error gets larger as the predicted
probability distribution $\hat{p}$ deviates more from the real distribution
$p$.

\subsubsection{Activity Error Ratio}
After applying {\sc FindBICA} in Algorithm~\ref{algo:inc} on the measurement data of length $T$ to obtain
$\hat{\G}$ and $\hat{p}$, realizations of the hidden variables
can be computed by solving the maximum likelihood estimation problem in (\ref{eq:IPP}). We define
\beq
\begin{array}{lll}
\bar{H}_y & \stackrel{\Delta}{=} & \frac{1}{nT} \sum_{i=1}^{T} d^H(\Y_{:,i},\hat{\Y}_{:,i}),
\end{array}
\eeq
where $\Y_{:,i}$ is the $i$'th column of $\Y$.
Similar to $\bar{H}_g$, this metric measures how accurately the activity
matrix is inferred by calculating the ratio between the size of $\yy$ and the absolute
difference between $\yy$ and $\hat{\yy}$.
\subsection{Experiment Results}
\begin{figure*}[tp]
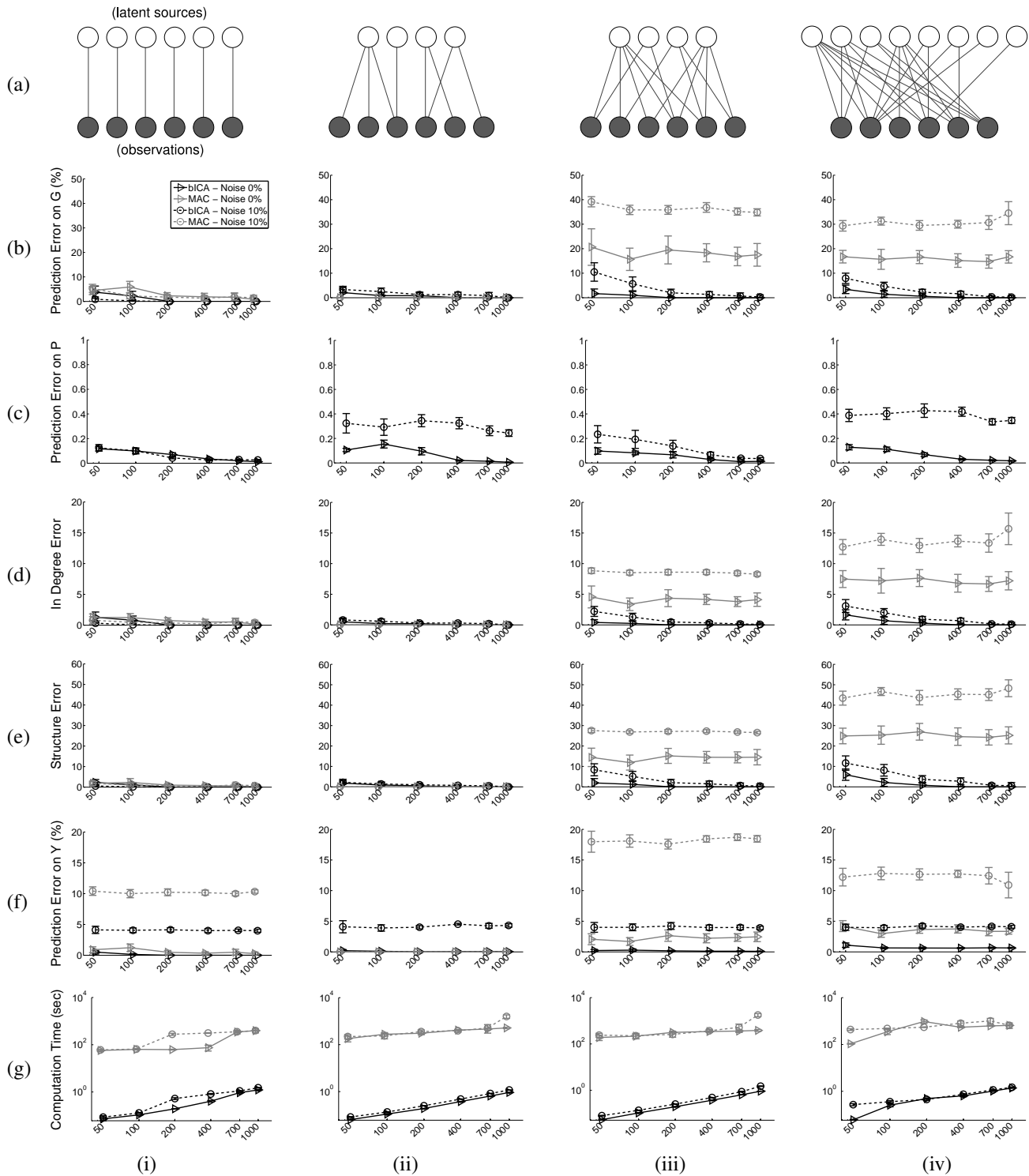

\begin{center}
\begin{tabular}{ m{0.1in} m{1.5in} m{1.5in} m{1.5in} m{1.5in} }
(a) &
\hspace{0.2in}\includegraphics[width=1.1in]{figures/1.eps}\hspace{0.2in} &
\hspace{0.2in}\includegraphics[width=1.1in]{figures/2.eps}\hspace{0.2in} &
\hspace{0.2in}\includegraphics[width=1.1in]{figures/3.eps}\hspace{0.2in} &
\includegraphics[width=1.5in]{figures/4.eps}\\
(b) &
\includegraphics[width=1.5in]{figures/g_error_1.eps} &
\includegraphics[width=1.5in]{figures/g_error_2.eps} &
\includegraphics[width=1.5in]{figures/g_error_3.eps} &
\includegraphics[width=1.5in]{figures/g_error_4.eps}\\
(c) &
\includegraphics[width=1.5in]{figures/p_error_1.eps} &
\includegraphics[width=1.5in]{figures/p_error_2.eps} &
\includegraphics[width=1.5in]{figures/p_error_3.eps} &
\includegraphics[width=1.5in]{figures/p_error_4.eps}\\
(d) &
\includegraphics[width=1.5in]{figures/in_degree_error_1.eps} &
\includegraphics[width=1.5in]{figures/in_degree_error_2.eps} &
\includegraphics[width=1.5in]{figures/in_degree_error_3.eps} &
\includegraphics[width=1.5in]{figures/in_degree_error_4.eps}\\
(e) &
\includegraphics[width=1.5in]{figures/structure_error_1.eps} &
\includegraphics[width=1.5in]{figures/structure_error_2.eps} &
\includegraphics[width=1.5in]{figures/structure_error_3.eps} &
\includegraphics[width=1.5in]{figures/structure_error_4.eps}\\
(f) &
\includegraphics[width=1.5in]{figures/inverse_error_1.eps} &
\includegraphics[width=1.5in]{figures/inverse_error_2.eps} &
\includegraphics[width=1.5in]{figures/inverse_error_3.eps} &
\includegraphics[width=1.5in]{figures/inverse_error_4.eps}\\
(g) &
\includegraphics[width=1.5in]{figures/computation_time_1.eps} &
\includegraphics[width=1.5in]{figures/computation_time_2.eps} &
\includegraphics[width=1.5in]{figures/computation_time_3.eps} &
\includegraphics[width=1.5in]{figures/computation_time_4.eps}
\end{tabular}
\begin{tabular}{ c c c c c }
\hspace{0.1in} & \hspace{0.7in}(i)\hspace{0.7in} & \hspace{0.7in}(ii)\hspace{0.7in} & \hspace{0.7in}(iii)\hspace{0.7in} & \hspace{0.7in}(iv)\hspace{0.7in}
\end{tabular}
\end{center}
\caption{Experiment result for bICA and MAC with fixed network topologies.
Black lines are bICA performance while gray lines are MAC performance.
The top row shows the four network topologies. The second to fifth rows show the
mean Normalized Hamming Distance, Transmission Probability Error, In Degree Error,
Structure Error and Activity Error Ratio of
four topologies as $T$ increases from 50 to 1,000. On the last
row is mean CPU runtime measured in seconds. Each graph shows
experiment results at 2 different levels of noise: 0\% and 10\%.
Error bars are symmetric, and indicate standard deviation
over 50 runs with different initialized seeds.} \label{fig:eval}
\end{figure*}
We have implemented the proposed algorithm in Matlab.  Four network topologies
are manually created (Fig. \ref{fig:eval}(a)) representing different scenarios:
connected vs. disconnected network, under-determined ($m > n$) vs.
over-determined network ($m < n$). All experiments are conducted on a
workstation with an Intel Core 2 Duo T5750\makeatletter@\makeatother2.00GHz
processor and 2GB RAM. For each scenario, 50 runs are executed; the results
presented are the average value and 99\% confidence interval.  To evaluate the
robustness of the interference algorithm, two different levels of noise are
introduced, i.e. $p_e$ = 0\% and 10\%. In our experiment, noise is generated by
randomly flipping an entry of the observation matrix $\X$ with probability
$p_e$.

Evaluation results for bICA and MAC at the two noise levels are presented in
Fig.~\ref{fig:eval}(b), (d), (e), and (f). We do not include the results of MAC
in Fig.~\ref{fig:eval}(c) since it does not infer the active probability for
hidden components.  From Fig.~\ref{fig:eval}(b), we observe at zero noise, bICA
converges quickly to the ground truth, and $\G$ matrix has been accurately
estimated with only 100 observations. bICA is comparable to or slightly
outperforms MAC in the first two topologies, and significantly outperform  MAC
on the later two.  It appears the MAC is quite sensitive to noise even in small
structures. In contrast, the accuracy of bICA under 10\% noise degrades
when the number of the observations is small but improves significantly as more observations are available.
Thus, bICA is more resilient to noise than MAC.  As shown in
Fig.~\ref{fig:eval}(d) and (e), inference errors tend to increase with the same
number of observations for both schemes as the structure become more complex
for both schemes. However, bICA only degrades gracefully.
%

Fig.~\ref{fig:eval}(f) shows the accuracy of the solution to the inverse
problem.  In this set of experiments, we first determine $\hat{\G}$ and
$\hat{p}$ from the measurement data of length $T$. Then realizations of the
hidden sources are estimated by solving the MLE problem in \eqref{eq:IPP}. The
predication error is measured by the Activity Error Ratio.  We see that at the
0\% noise level, both methods perform quite well. Since the inference of each
realization of $\yy$ is independent from the others, increasing $T$ does not
help improving the accuracy of the inverse problem (though higher T gives a
better estimation of $\G$ and $p$.  Performance of both methods degrades as the
noise level increases. Noise has two effects on the solution to inverse
problem. First, the inferred mixing matrix and the active probability can be
erroneous. Second, no maximum likelihood estimator guarantees to give the exact
result when the problem is under or close to under-determined with noisy
measurements. To verify the second argument, we provide the exact $\G$ and $p$
to the MLE formulation, and observe comparable errors in the inferred $\yy$ as
the case where $\G$ and $p$ are both inferred by bICA.  This implies that the
main source of errors in the inverse problem comes from the under-determined or
close to under-determined name of the problem.

\begin{table}[h!]
\begin{center}
\caption{Average computation time (in seconds) of bICA and MAC over the 4 network topologies, 2 noise levels, and all sample sizes.}
\begin{tabular}{|c||c|c|c|c|}
\hline
 & Topology 1 & Topology 2 & Topology 3 & Topology 4\\
\hline \hline
bICA & 0.49 & 0.38 & 0.38 & 0.64 \\
\hline
MAC & 166.5 & 289.21 & 302.35 & 538.47 \\
\hline
\end{tabular}
\label{tab:time}
\end{center}
\end{table}

Finally, from Fig. \ref{fig:eval}(g), the computation time of bICA
is negligible (under 0.5 second in most cases and under 1.5 seconds
in the worst case). For the ease of comparison, we list the
numerical values in Table~\ref{tab:time}. MAC uses a gradient descent
optimization scheme, it is much more time-consuming.  As mentioned earlier, the
complexity of bICA is a function of $m$ and the sparsity of $\G$. In practice,
computation time of bICA is also a monotonic function of the number of
observations $T$ and noise levels. As $T$ is getting larger, the storage and
computation complexity tends to grow.  When $T$ is small or the noise level is
high, the structure inferred may error on the higher complexity side, resulting
longer computation time.

\section{Applications}
\label{sec:application}
In this section, we present some case studies on  real-world application of
bICA.  In general, bICA can be applied to any problem that need identifying
hidden source signals from binary observations. The proposed method therefore
can find applications in many domains. In multi-assignment clustering
\cite{Streich09}, where boolean vectorial data can simultaneously belong to
multiple clusters, the binary data can be modeled as the disjunction of the
prototypes of all clusters the data item belongs to. In medical diagnosis, the
symptoms of patients are explained as the result of diseases that are not
themselves directly observable \cite{computer06anon-parametric}.  Multiple
diseases can exhibit similar symptoms. In the Internet tomography
\cite{Castro04networktomography}, losses on end-to-end paths can be attributed
to losses on different segments (e.g., edges) of the paths.  In all above
generic applications, the underlying data models can be viewed as disjunctions
of binary independent components (e.g., membership of a cluster, presence of a
disease, packet losses on a network edge, etc). Now we will introduce in detail
two specific network applications in which bICA has been effectively applied.

\subsection{Optimal monitoring for multi-channel wireless networks}
Passive monitoring is a technique where a dedicated set of hardware devices
called \textit{sniffers}, or monitors, are used to monitor activities in
wireless networks. These devices capture transmissions of wireless devices or
activities of interference sources in their vicinity, and store the information
in trace files, which can be analyzed distributively or at a central location.
Most operational networks operate over multiple channels, while a single radio
interface of a sniffer can only monitor one channel at a time. Thus, the
important question is to decide the sniffer-channel assignment to maximize the
total information (user transmitted packets) collected.

\paragraph*{Network model and optimal monitoring}
Consider a system of $m$ sniffers, and $n$ users, where each user $u$ operates
on one of $K$ channels, $c(u) \in \MK=\set{1,\ldots,K}$. The users can be
wireless (mesh) routers, access points or mobile users.  At any point in time,
a sniffer can only monitor packet transmissions over {\bf a single channel}. We
represent the relationship between users and sniffers using an undirected
bi-partite graph $G=(S,U,E)$, where $S$ is the set of sniffer nodes and $U$ is
the set of users. An edge $e=(s,u)$ exists between sniffer $s\in S$ and user $u
\in U$ if $s$ can capture the transmission from $u$.  If transmissions from a
user cannot be captured by any sniffer, the user is excluded from $G$. For
every vertex $v \in U\cup S$, we let $N(v)$ denote vertex $v$'s neighbors in
$G$. For users, their neighbors are sniffers, and vice versa. We will also
refer to $\G$ as the binary $m \times n$ adjacency matrix of graph $G$.
An example network with sniffers and users, the corresponding bipartite graph $G$,
and its matrix representation $\G$ are given in Figure~\ref{fig:mt}.

\begin{figure}[tp]
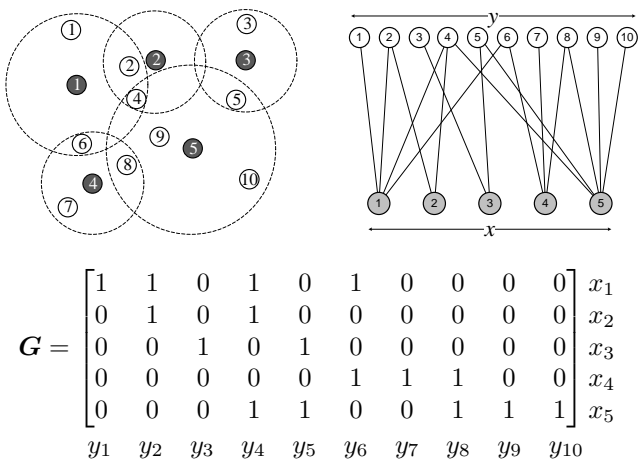

\begin{center}
\begin{tabular}{cc}
\includegraphics[width=1.6in]{figures/sniffer_illus.pdf} & \includegraphics[width=1.6in]{figures/bipart.pdf}
\end{tabular}
$$
\G =
\begin{bmatrix}
1&\hspace{1.5mm}1&\hspace{1.5mm}0&\hspace{1.5mm}1&\hspace{1.5mm}0&\hspace{1.5mm}1&\hspace{1.5mm}0&\hspace{1.5mm}0&\hspace{1.5mm}0&\hspace{1.5mm}0\\
0&\hspace{1.5mm}1&\hspace{1.5mm}0&\hspace{1.5mm}1&\hspace{1.5mm}0&\hspace{1.5mm}0&\hspace{1.5mm}0&\hspace{1.5mm}0&\hspace{1.5mm}0&\hspace{1.5mm}0\\
0&\hspace{1.5mm}0&\hspace{1.5mm}1&\hspace{1.5mm}0&\hspace{1.5mm}1&\hspace{1.5mm}0&\hspace{1.5mm}0&\hspace{1.5mm}0&\hspace{1.5mm}0&\hspace{1.5mm}0\\
0&\hspace{1.5mm}0&\hspace{1.5mm}0&\hspace{1.5mm}0&\hspace{1.5mm}0&\hspace{1.5mm}1&\hspace{1.5mm}1&\hspace{1.5mm}1&\hspace{1.5mm}0&\hspace{1.5mm}0\\
0&\hspace{1.5mm}0&\hspace{1.5mm}0&\hspace{1.5mm}1&\hspace{1.5mm}1&\hspace{1.5mm}0&\hspace{1.5mm}0&\hspace{1.5mm}1&\hspace{1.5mm}1&\hspace{1.5mm}1
\end{bmatrix}
\begin{matrix}
x_1\\x_2\\x_3\\x_4\\x_5
\end{matrix}
\vspace{-2mm}
$$
$$
\hspace{5mm}
\begin{matrix}
y_1&y_2&y_3&y_4&y_5&y_6&y_7&y_8&y_9&y_{10}
\end{matrix}
$$
\end{center}
\caption{A sample network scenario with number of sniffers $m = 5$, number of users $n = 10$, its bipartite
graph transformation and its matrix representation. White circles represent independent users,
black circles represent sniffers and dashed lines illustrate sniffers' coverage range.}
\label{fig:mt}
\end{figure}

If we assume that $\G$ is known by inspecting packet headers information from each
sniffers' captured traces, then the transmission probability of the users
$p = (p_u)_{u\in U}$ are available and are assumed to be independent. As mentioned
earlier, the more complete information can be collected, the easier it is for a network
administrator to make decisions regarding network troubleshooting. We can
therefore measure the quality of monitoring by the total expected number of active users
monitored by the sniffers. Our problem now is to find a sniffer assignment of sniffers
to channels so that the expected number of active users monitored is maximized.
It can be casted as the following integer program:

\beq
\label{eq:ip-mec}
\begin{array}{rll}
\underset{y,z}{\max.} & \sum_{u \in U} p_u y_u \\[0.2cm]
\mbox{s.t.}
& \sum_{k=1}^K z_{s,k} \leq 1, & \quad \forall s\in S, \\
& y_u \leq  \sum_{s \in N(u)} z_{s,c(u)}, & \quad \forall u \in U,\\
& y_u \leq 1, & \quad \forall u\in U, \\
& y_u, z_{s,k} \in \set{0,1}, & \quad \forall u,s,k,
\end{array}
\eeq
where the binary decision variable $z_{s,k} = 1$ if the sniffer is assigned to channel $k$;
0 otherwise. $y_u$ is a binary variable indicating whether or not user $u$ is monitored,
and $p_u$ is the active probability of user $u$.

\paragraph*{Network topology inference with binary observations}
From (\ref{eq:ip-mec}), it is clear that we need the network and user-level
information in order to maximize the quality of monitoring. However, this
information is not always available.  We consider binary sniffers, or sniffers
that can only capture {\bf binary} information (\textit{on} or \textit{off})
regarding the channel activity. Examples of such kind of sniffers are energy
detection sensors using for spectrum sensing. The problem now is to infer the
user-sniffer relationship (i.e. $\G$) and the active probability of users from
the observation data (i.e. $\X$).  Let $\xx = [x_1, x_2,  \ldots, x_m]^T$ be a
vector of $m$ binary random variables and $\X$ be the collection of $T$
realizations of $\xx$, where $x_{it}$ denotes whether or not sniffer $s_i$
captures communication activities in its associated channel at time slot $t$.
Let $\yy = [y1, y2,\ldots, y_n]^T$ be a vector of $n$ binary random variables,
where $y_j = 1$ if user $u_j$ transmits in its associated channel, and $y_j =
0$ otherwise. Sniffer observations are thus disjunctive mixtures of user
activities. In other words, relationship between $\xx$ and $\yy$ is $\xx = \G
\otimes \yy$ and thus we  can use bICA to infer $\G$ and $p$.

\paragraph*{WiFi trace collection and evaluation result}
We evaluate our proposed scheme by data traces collected from the University of Houston campus
wireless network using 21 WiFi sniffers deployed in the Philip G. Hall.
Over a period of 6 hours, between 12 p.m. and 6 p.m., each sniffer captured approximately
300,000 MAC frames. Altogether, 655 unique users are observed operating over three
channels. The number of users observed on channels 1, 6, 11
are 382, 118, and 155, respectively. Most users are active less than 1\% of the time
except for a few heavy hitters. User-level information is removed leaving only
binary channel observation from each sniffer. $\G$ and $p$ are then inferred using
bICA and input to the integer program (\ref{eq:ip-mec}) to find the best sniffer channel
assignment that maximize the expected number of active users monitored. Obviously,
the more accurate the network model is inferred, the better the assignment is and the
more users are monitored. We also vary the result by randomly select a subset of sniffers
and observe the number of monitored users from this set of sniffers. Result is presented in
Fig.~\ref{fig:ton}

\begin{figure}[h]
\begin{center}
\includegraphics[width=3.0in]{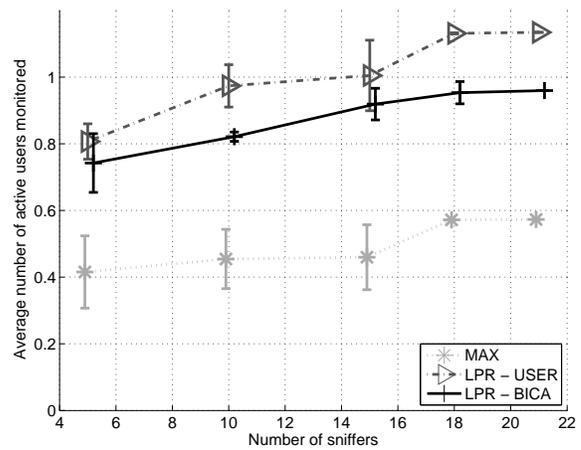}
\end{center}
\caption{Expected number of active users monitored with the number of sniffers vary from 5 to 21.}
\label{fig:ton}
\end{figure}

In Fig.~\ref{fig:ton}, we compare the average of number of active users
monitored using the inferred $\hat{\G}$ and $\hat{p}$, and the ground truth. The
integer programming problem (\ref{eq:ip-mec}) is solved using a random rounding
procedure on its LP relaxation, which is shown to perform very close to the LP upper
bound in our earlier work~\cite{zheng09QOM}.

Note that most users are active less than 1\% and the average active
probability of users is 0.0014. The system consists of 655 unique users,
therefore the average number of active users monitored is around 1. For
comparison, we also include a naive scheme (Max) that puts each sniffer to its
busiest channel. Therefore, Max does not infer or utilize any structure
information. From Fig.~\ref{fig:ton}, we observe that the sniffer-channel
assignment scheme with bICA (LPR -- BICA) performs close to the case when full information is
available (LPR -- USER), and much better than an agnostic scheme such as Max (MAX). This
demonstrates that bICA can indeed recover useful structure information from the
observations.


\subsection{PU separation in cognitive radio networks}
With tremendous growth in wireless services, the demand for radio spectrum
has significantly increased. However, spectrum resources are scarce and
most of them have been already licensed to existing operators. Recent studies
have shown that despite claims of spectral scarcity, the actual licensed
spectrum remains unoccupied for long periods of time~\cite{FCC}. Thus,
cognitive radio (CR) systems have been proposed~\cite{CR00,CR01,CR02} in order to
efficiently exploit these spectral holes{, in which licensed primary users (PUs) are not present}. CRs or secondary users (SUs) are
wireless devices that can intelligently monitor and adapt to their environment,
hence, they are able to share the spectrum with the licensed PUs, operating when the PUs are idle.

One key challenge in CR systems is spectrum sensing, i.e., SUs attempt to
learn the environment and determine the presence and characteristics of PUs.
Energy detection is one of the most commonly used method for spectrum sensing, where
the detector computes the energy of the received signals and
compares it to a certain threshold value to decide whether the PU signal
is present or not. It has the advantage of short detection time
but suffers from low accuracy compared to feature-based approaches such as cyclostationary
detection~\cite{CR01,CR02}.  From the prospective of a CR system, it is often
insufficient to detect PU activities in a single SU's vicinity (``is there any
PU near me?").  Rather, it is important to determine the identity of PUs (``who
is there?") as well as the distribution of PUs in the field (``where are
they?"). We call these issues the {\it PU separation problem}. Clearly, PU separation
is a more challenging problem compared to node-level PU detection.

\paragraph*{Solving PU separation problem with bICA}
Consider a slotted system in which the transmission activities of $n$ PUs are modeled
as a set of independent binary variables $\yy$ with active probabilities $\MP(\yy)$.
The binary observations due to energy detection at the $m$ monitor nodes are modeled
as an $m$-dimension binary vector $\xx = [x_1, x_2, \ldots, x_m]^T$ with joint
distribution $\MP(\xx)$. It is assumed that presence of any active PU surrounding
of a monitor leads to positive detection. If we let a (unknown) binary mixing matrix $\G$
represents the relationship between the observable variables in $\xx$ and the latent
binary variables in $\yy = [y_1, y_2, \ldots, y_n]^T$, then we can write $\xx = \G \otimes \yy$.
The PU separation is therefore amenable to bICA.

\paragraph*{Inferring PU activities with the inverse problem}
Extracting multiple PUs activities from the OR mixture observations is a challenging
but important problem in cognitive radio networks. Interesting information, such as the
PU channel usage pattern can be inferred once $\Y$ is available. The SUs will then be able
to adopt better spectrum sensing and access strategies to exploit the spectrum holes more
effectively. Now suppose that we are given the observation matrix $\X$ and already estimated
the mixing matrix $\G$ and the active probabilities $\MP(\yy)$. Solving the inverse problem
gives the PU activity matrix $\Y$.

\paragraph*{Simulation setup and result}
In the simulation, 10 monitors and $n$ PUs
are deployed in a 500x500 square meter area. We fix the sample size $T =
10,000$ and vary the number of PUs from 5 to 20 to study its impact on the accuracy of our
method. Locations of PUs are chosen arbitrarily on the field. The PUs' transmit power levels
are fixed at 20mW, the noise floor is -95dbm, and the propagation loss factor is 3. The SNR
detection threshold for the monitors is set to be 5dB. PUs' activities are modeled as a
two-stage Markov chain with transition probabilities uniformly distributed over [0; 1].
A monitor reports the channel occupancy if any detectable PU is active.
Noise is introduced by randomly flip a bit in the
observation matrix $\X$ from 1 to 0 (and vice versa) at probability $e$. $e$ is set at 0\%, 2\%, and 5\%.
Prediction error on $\G$ and $\Y$ over 50 runs for each PU setting are shown in Fig.~\ref{fig:twc}.

\begin{figure}[h]
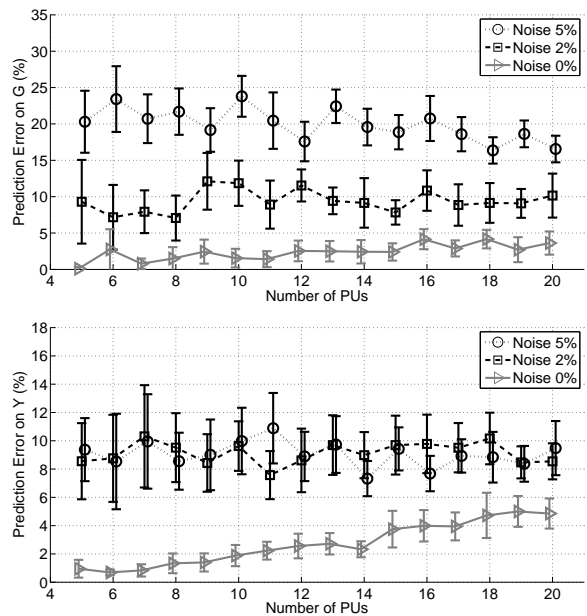

\begin{center}
\includegraphics[width=3.6in]{figures/g_error.eps} \\
\includegraphics[width=3.6in]{figures/y_error.eps}
\end{center}
\caption{Inference error on $\G$ and $\Y$ at three noise levels.}
\label{fig:twc}
\end{figure}

As we can see, at noise level 0\%, bICA can accurately estimate the underlying
PU-SU relationship and the hidden PU activities for small number of PUs.
However, introducing noise or having more PUs tend to degrade the performance of
the inference scheme. The errors in the noisy cases can be attributed to the
fact that the average PU active probability is around 2\%, which is comparable
to the noise level.  More information on this application can be found
in~\cite{zheng10twc}.

\section{Conclusion}
\label{sec:conclusion}
In this paper, we provided a comprehensive study of the binary independent
component analysis with OR mixtures. Key properties of bICA were established. A
computational efficient inference algorithm have been devised along with the
solution to the inverse problem. Compared to MAC, bICA is not only  faster, but
also more accurate and robust against noise.  We have also demonstrated the use
of bICA in two network applications, namely, optimal monitoring in
multi-channel wireless networks and PU separation in cognitive radio networks.
We believe the methodology devised can be useful in many other application
domains.

As future work, we are interested in devising inference schemes that can easily
incorporate priori knowledge of the structure or active probability of latent
variables. Also on the agenda is to apply bICA to problems in application
domains beyond wireless networks.

\section*{Acknowledgment}
The authors would like to thank Dr. Jian Shen from Texas State at San Marcos
for suggestions of the bipartite graph matching algorithm. The work is funded
in part by the National Science Foundation under award CNS-0832084.
\bibliographystyle{IEEEtran}

\end{document}